\let\oldnl\nl
\newcommand{\nonl}{\renewcommand{\nl}{\let\nl\oldnl}}
\title{Scheduling Distributed Clusters of Parallel Machines : Primal-Dual and LP-based Approximation Algorithms [Full Version] \footnote{All authors conducted this work at the University of Maryland, College Park. This work was made possible by the National Science Foundation, REU Grant CCF 1262805, and the Winkler Foundation. This work was also partially supported by NSF Grant CCF 1217890.}}
\titlerunning{Scheduling Distributed Clusters of Parallel Machines [Full Version]} 
\author[1]{Riley Murray}
\author[2]{Samir Khuller}
\author[3]{Megan Chao}
\affil[1]{Department of Industrial Engineering \& Operations Research, University of California, Berkeley \\
Berkeley, CA 94709, USA\\
 \texttt{rjmurray@berkeley.edu}}
\affil[2]{Department of Computer Science, University of Maryland, College Park \\
  College Park, MD 20742, USA\\
  \texttt{samir@cs.umd.edu}}
\affil[3]{Department of Electrical Engineering \& Computer Science, Massachusetts Institute of Technology \\
  50 Vassar St, Cambridge, MA 02142, USA\\
  \texttt{megchao@mit.edu}}
\authorrunning{R. Murray, S. Khuller, and M. Chao} 
\subjclass{F.2.2 Nonnumerical Algorithms and Problems }
\keywords{approximation algorithms, distributed computing, machine scheduling, LP relaxations, primal-dual algorithms }
\begin{document}

\maketitle

\begin{abstract}
The Map-Reduce computing framework rose to prominence with datasets of such size that dozens of machines on a single cluster were needed for individual jobs. 
As datasets approach the exabyte scale, a single job may need distributed processing not only on multiple machines, but on multiple \textit{clusters}.
We consider a scheduling problem to minimize weighted average completion time of $n$ jobs on $m$ distributed clusters of parallel machines. 
In keeping with the scale of the problems motivating this work, we assume that (1) each job is divided into $m$ ``subjobs'' and (2) distinct subjobs of a given job may be processed concurrently. 

When each cluster is a single machine, this is the NP-Hard \textit{concurrent open shop} problem.
A clear limitation of such a model is that a serial processing assumption sidesteps the issue of how different tasks of a given subjob might be processed in parallel. 
Our algorithms explicitly model clusters as pools of resources and effectively overcome this issue.

Under a variety of parameter settings, we develop two constant factor approximation algorithms for this problem. The first algorithm uses an LP relaxation tailored to this problem from prior work. This LP-based algorithm provides strong performance guarantees. Our second algorithm exploits a surprisingly simple mapping to the special case of one machine per cluster.
This mapping-based algorithm is combinatorial and extremely fast. These are the first constant factor approximations for this problem.

\textit{Remark - A shorter version of this paper (one that omitted several proofs) appeared in the proceedings of the 2016 European Symposium on Algorithms.}
\end{abstract}

\section{Introduction}\label{sec:intro}
%
%

It is becoming increasingly impractical to store full copies of large datasets on more than one data center \cite{Hajjat2012}. As a result, the data for a single job may be located not on multiple machines, but on multiple \textit{clusters} of machines. 
To maintain fast response-times and avoid excessive network traffic, it is advantageous to perform computation for such jobs in a completely distributed fashion \cite{HGY}.
In addition, commercial platforms such as AWS Lambda and Microsoft's Azure Service Fabric are demonstrating a trend of centralized cloud computing frameworks in which the user manages neither data flow nor server allocation \cite{AWSLambda, Azure}. 
In view of these converging issues, the following scheduling problem arises:

\textit{If computation is done locally to avoid excessive network traffic, how can individual clusters on the broader grid coordinate schedules for maximum throughput? }

This was precisely the motivation for Hung, Golubchik, and Yu in their 2015 ACM Symposium on Cloud Computing paper \cite{HGY}.  
Hung et al. modeled each cluster as having an arbitrary number of identical parallel machines, and choose an objective of average job completion time. 
As such a problem generalizes the NP-Hard concurrent open shop problem, they proposed a heuristic approach. 
Their heuristic (called ``SWAG'') runs in $O(n^2m)$ time and performed well on a variety of data sets. Unfortunately, SWAG offers poor worst-case performance, as we show in Section \ref{sec:TSPT}.

Our contributions to this problem are to extend the model considered by Hung et al. and to introduce the first constant-factor approximation algorithms for this general problem. 
Our extensions of Hung et al.'s model are (1) to allow different machines within the same cluster to operate at different speeds, (2) to incorporate pre-specified ``release times'' (times before which a subjob cannot be processed), and (3) to support \textit{weighted} average job completion time.
We present two algorithms for the resulting problem.
Our combinatorial algorithm exploits a surprisingly simple mapping to the special case of one machine per cluster, where the problem can be approximated in $O(n^2 + nm)$ time. We also present an LP-rounding approach with strong performance guarantees. E.g., a 2-approximation when machines are of unit speed and subjobs are divided into equally sized (but not necessary \textit{unit}) tasks.

\subsection{Formal Problem Statement}\label{subsec:ps}
\begin{definition}[Concurrent Cluster Scheduling]{\color{white} . } \hfill 
\vspace{2pt}
\begin{itemize}
\item There is a set $M$ of $m$ clusters, and a set $N$ of $n$ jobs. 
For each job $j \in N$, there is a set of $m$ ``subjobs'' (one for each cluster).

\item Cluster $i \in M$ has $m_i$ parallel machines, and machine $\ell$ in cluster $i$ has speed $v_{\ell i}$. 
Without loss of generality, assume $v_{\ell i}$ is decreasing in $\ell$.
\footnote{Where we write ``decreasing'', we mean ``non-increasing.'' Where we write ``increasing'', we mean ``non-decreasing''.} 

\item The $i^{\text{th}}$ subjob for job $j$ is specified by a set of tasks to be performed by machines in cluster $i$, denote this set of tasks $T_{ji}$. 
For each task $t \in T_{ji}$, we have an associated processing time $p_{jit}$ (again w.l.o.g., assume $p_{jit}$ is decreasing in $t$). 
We will frequently refer to ``the subjob of job $j$ at cluster $i$'' as ``subjob $(j,i)$.''

\item Different subjobs of the same job may be processed concurrently on different clusters. 
\item Different tasks of the same subjob may be processed concurrently on different machines within the same cluster.
\item A subjob is complete when all of its tasks are complete, and a job is complete when all of its subjobs are complete. We denote a job's completion time by ``$C_j$''.
\item The objective is to minimize weighted average job completion time (job $j$ has weight $w_j$).
\item For the purposes of computing approximation ratios, it is equivalent to minimize $\sum w_j C_j$. We work with this equivalent objective throughout this paper.
\end{itemize}
\end{definition}

A machine is said to operate at \textit{unit speed} it if can complete a task with processing requirement ``$p$'' in $p$ units of time. More generally, a machine with speed ``$v$'' ($v \geq 1$) processes the same task in $p/ v$ units of time. Machines are said to be \textit{identical} if they are all of unit speed, and \textit{uniform} if they differ only in speed.

In accordance with Graham et al.'s $\alpha|\beta|\gamma$ taxonomy for scheduling problems \cite{Graham1979} we take $\alpha = CC$ to refer to the concurrent cluster environment, and denote our problem by $CC|| \sum w_jC_j$.\footnote{ A problem $\alpha|\beta|\gamma$ implies a particular environment $\alpha$, objective function $\gamma$, and optional constraints $\beta$.} Optionally, we may associate a release time $r_{ji}$ to every subjob.  If any subjobs are released after time zero, we write $CC| r | \sum w_jC_j$.

\subsubsection{Example Problem Instances}\label{subsubsec:introToModel}

We now illustrate our model with several examples (see Figures \ref{fig:probstm1and2} and \ref{fig:probstm3and4}). The tables at left have rows labeled to identify jobs, and columns labeled to identify clusters; each entry in these tables specifies the processing requirements for the corresponding subjob. The diagrams to the right of these tables show how the given jobs might be scheduled on clusters with the indicated number of machines. 
\begin{figure}[ht!]
\centering
  \includegraphics[width=0.8\linewidth]{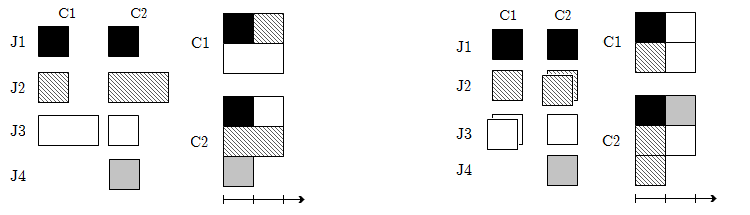}
	\caption{Two examples of our scheduling model. \textbf{Left}: Our baseline example. There are 4 jobs and 2 clusters. Cluster 1 has 2 identical machines, and cluster 2 has 3 identical machines. Note that job 4 has no subjob for cluster 1 (this is permitted within our framework). In this case every subjob has at most one task. \textbf{Right}: Our baseline example with a more general subjob framework : subjob (2,2) and subjob (3,1) both have two tasks. The tasks shown are unit length, but our framework \textit{does not} require that subjobs be divided into equally sized tasks. }
  \label{fig:probstm1and2}
\end{figure}
\begin{figure}[ht!]
  \includegraphics[width=\linewidth]{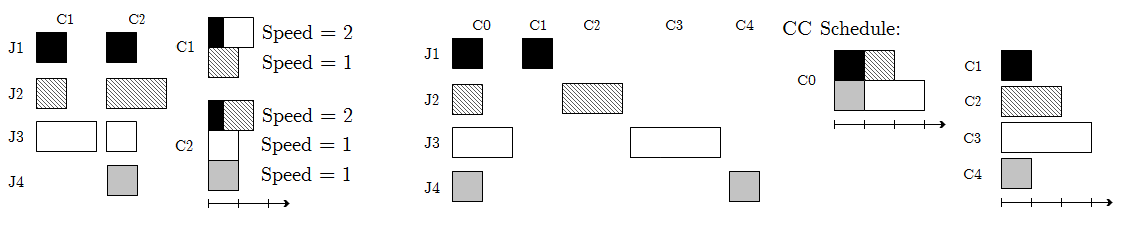}
	\caption{Two additional examples of our model. \textbf{Left}: Our baseline example, with variable machine speeds. Note that the benefit of high machine speeds is only realized for tasks assigned to those machines in the final schedule. \textbf{Right}: A problem with the peculiar structure that (1) all clusters but one have a single machine, and (2) most clusters have non-zero processing requirements for only a single job. We will use such a device for the total weighted lateness reduction in Section \ref{sec:relationshipsBetweenProbs}.}
  \label{fig:probstm3and4}
\end{figure}

\subsection{Related Work}

Concurrent cluster scheduling subsumes many fundamental machine scheduling problems. For example, if we restrict ourselves to a single cluster (i.e. $m = 1$) we can schedule a set of jobs on a bank of identical parallel machines to minimize makespan ($C_{\max}$) or total weighted completion time ($\sum w_j C_j$). With a more clever reduction, we can even minimize \textit{total weighted lateness} ($\sum w_j L_j$) on a bank of identical parallel machines (see Section \ref{sec:relationshipsBetweenProbs}). Alternatively, with $m > 1$ but $\forall i \in M, m_i = 1$, our problem reduces to the well-studied ``concurrent open shop'' problem.

Using Graham et al.'s taxonomy, the concurrent open shop problem is written as $PD||\sum w_j C_j$. Three groups \cite{Chen2000, Garg2007, llp} independently discovered an LP-based 2-approximation for $PD||\sum w_j C_j$ using the work of Queyranne \cite{Queyranne1993}. The linear program in question has an exponential number of constraints, but can still be solved in polynomial time with a variant of the Ellipsoid method. Our ``strong'' algorithm for concurrent cluster scheduling refines the techniques contained therein, as well as those of Schulz \cite{Schulz1996, Schulz2012} (see Section \ref{sec:lpAlg}).

Mastrolilli et al. \cite{mqssu} developed a primal-dual algorithm for $PD || \sum w_j C_j$ that does not use LP solvers. ``MUSSQ''\footnote{A permutation of the author's names: Mastrolilli, Queyranne, Schulz, Svensson, and Uhan.} is significant for both its speed and the strength of its performance guarantee : it achieves an approximation ratio of 2 in only $O(n^2 + nm)$ time. Although MUSSQ does not require an LP solver, its proof of correctness is based on the fact that it finds a feasible solution to the dual a particular linear program. Our ``fast'' algorithm for concurrent cluster scheduling uses MUSSQ as a subroutine (see Section \ref{sec:TSPT}).

Hung, Golubchik, and Yu \cite{HGY} presented a framework designed to improve scheduling across geographically distributed data centers. The scheduling framework had a centralized scheduler (which determined a job ordering) and local dispatchers which carried out a schedule consistent with the controllers job ordering. Hung et al. proposed a particular algorithm for the controller called ``SWAG.'' SWAG performed well in a wide variety of simulations where each data center was assumed to have the same number of identical parallel machines. We adopt a similar framework to Hung et al., but we show in Section \ref{subsec:swagDegenerate} that SWAG has no constant-factor performance guarantee.

\subsection{Paper Outline \& Algorithmic Results}\label{subsec:outlineAndResults}

Although only one of our algorithms requires \textit{solving} a linear program, both algorithms use the same linear program in their proofs of correctness; we introduce this linear program in Section \ref{sec:introduceLP} before discussing either algorithm. Section \ref{sec:listSched} establishes how an ordering of jobs can be processed to completely specify a schedule. This is important because the complex work in both of our algorithms is to generate an ordering of jobs for each cluster.

Section \ref{sec:lpAlg} introduces our ``strong'' algorithm: CC-LP. CC-LP can be applied to any instance of concurrent cluster scheduling, including those with non-zero release times $r_{ji}$. A key in CC-LP's strong performance guarantees lay in the fact that it allows different permutations of subjobs for different clusters. By providing additional structure to the problem (but while maintaining a generalization of concurrent open shop) CC-LP becomes a 2-approximation. This is significant because it is NP-Hard to approximate concurrent open shop (and by extension, our problem) with ratio $2-\epsilon$ for any $\epsilon > 0$ \cite{nphard2}.

Our combinatorial algorithm (``CC-TSPT'') is presented in Section \ref{sec:TSPT}. The algorithm is fast, provably accurate, and has the interesting property that it can schedule all clusters using the same permutation of jobs.\footnote{We call such schedules ``single-$\sigma$ schedules.'' As we will see later on, CC-TSPT serves as a constructive proof of existence of near-optimal single-$\sigma$ schedules for all instances of $CC||\sum w_j C_j$, \textit{including} those instances for which single-$\sigma$ schedules are strictly sub-optimal. This is addressed in Section \ref{sec:discAndConc}.} After considering CC-TSPT in the general case, we show how fine-grained approximation ratios can be obtained in the ``fully parallelizable'' setting of Zhang et al. \cite{zwl}. We conclude with an extension of CC-TSPT that maintains performance guarantees while offering improved empirical performance.

The following table summarizes our results for approximation ratios. For compactness, condition $Id$ refers to identical machines (i.e. $v_{\ell i}$ constant over $\ell$), condition $A$ refers to $r_{ji} \equiv 0$, and condition $B$ refers to $p_{jit} \text{ constant over } t \in T_{ji}$.
\begin{center}
\begin{tabular}{l| cccccc}
\hline
 		& $(Id,A,B)$ & $(Id, \neg A, B)$ & $(Id,A,\neg B)$ & $(Id,\neg A, \neg B)$ & $(\neg Id, A)$ & $(\neg Id, \neg A)$ \\ \hline
CC-LP	&	2	  &     3         &    3         &   4  &  $2+R$     &  $3+R$    \\
CC-TSPT &   3     &     -          &   3         &   -  &  $2+R$      &    -  \\ \hline
\end{tabular}
\end{center}
The term $R$ is the maximum over $i$ of $R_i$, where $R_i$ is the ratio of fastest machine to \textit{average} machine speed at cluster $i$.

The most surprising of all of these results is that our scheduling algorithms are remarkably simple. The first algorithm solves an LP,
and then the scheduling can be done easily on each cluster. The second algorithm is again a rather surprising simple reduction to the
case of one machine per cluster (the well understood concurrent open shop problem) and yields a simple combinatorial algorithm. The proof
of the approximation guarantee is somewhat involved however.

In addition to algorithmic results, we demonstrate how our problem subsumes that of minimizing total weighted lateness on a bank of identical parallel machines (see Section \ref{sec:relationshipsBetweenProbs}). Section \ref{sec:discAndConc} provides additional discussion and highlights our more novel technical contributions.

\section{The Core Linear Program }\label{sec:introduceLP}



Our linear program has an unusual form. Rather than introduce it immediately, we conduct a brief review of prior work on similar LP's. All the LP's we discuss in this paper have objective function $\sum w_j C_j$, where $C_j$ is a decision variable corresponding to the completion time of job $j$, and $w_j$ is a weight associated with job $j$. 

\textit{For the following discussion only, we adopt the notation in which job $j$ has processing time $p_j$. In addition, if multiple machine problems are discussed, we will say that there are $\mathsf{m}$ such machines (possibly with speeds $s_i, i \in \{1,\ldots, \mathsf{m}\}$).} 

The earliest appearance of a similar linear program comes from Queyranne \cite{Queyranne1993}. In his paper, Queyranne presents an LP relaxation for sequencing $n$ jobs on a single machine where all constraints are of the form $\sum_{j \in S} p_j C_j \geq \frac{1}{2}\left[\left(\sum_{j \in S} p_j \right)^2 + \sum_{j \in S} p_j^2\right]$ where $S$ is an arbitrary subset of jobs. Once a set of optimal $\{C_j^\star\}$ is found, the jobs are scheduled in increasing order of $\{C_j^\star\}$. These results were primarily theoretical, as it was known at his time of writing that sequencing $n$ jobs on a single machine to minimize $\sum w_j C_j$ can be done optimally in $O(n \log n)$ time.

Queyranne's constraint set became particularly useful for problems with \textit{coupling} across distinct machines (as occurs in concurrent open shop). Four separate groups \cite{Chen2000,Garg2007,llp, mqssu} saw this and used the following LP in a 2-approximation for concurrent open shop scheduling.
\begin{equation}
(\text{LP0}) ~~ \min \sum_{j \in N} w_j C_j ~~ \text{s.t.} ~~ \textstyle\sum_{j \in S} p_{ji} C_j \geq 
		\frac{1}{2}
		\left[ 
			\left(\textstyle\sum_{j \in S} p_{ji}\right)^2 + \left(\textstyle\sum_{j \in S} p_{ji}^2\right) 
		\right] ~ \forall ~ \substack{S \subseteq N \\  i \in M }\nonumber
\end{equation}
In view of its tremendous popularity, we sometimes refer to the linear program above as the \textit{canonical relaxation} for concurrent open shop.

Andreas Schulz's Ph.D. thesis developed Queyranne's constraint set in greater depth \cite{Schulz1996}. As part of his thesis, Schulz considered scheduling $n$ jobs on $\mathsf{m}$ identical parallel machines with constraints of the form $\sum_{j \in S} p_j C_j \geq \frac{1}{2\mathsf{m}} \left(\sum_{j \in S} p_j \right)^2 + \frac{1}{2}\sum_{j \in S} p_j^2$. In addition, Schulz showed that the constraints $\sum_{j \in S} p_j C_j \geq \left[2 \sum_{i=1}^{\mathsf{m}} s_i \right]^{-1}\left[\left(\sum_{j \in S} p_j \right)^2 + \sum_{j \in S} p_j^2\right]$ are satisfied by any schedule of $n$ jobs on $\mathsf{m}$ uniform machines. In 2012, Schulz refined the analysis for several of these problems \cite{Schulz2012}. For constructing a schedule from the optimal $\{C_j^\star\}$, Schulz considered scheduling jobs by increasing order of $\{C_j^\star\}$, $\{C_j^\star - p_j/2\}$, and $\{C_j^\star - p_j/(2\mathsf{m})\}$.

\subsection{Statement of LP1}

The model we consider allows for more fine-grained control of the job structure than is indicated by the LP relaxations above. Inevitably, this comes at some expense of simplicity in LP formulations. In an effort to simplify notation, we define the following constants, and give verbal interpretations for each. 
\begin{equation}
 \mu_{i} \doteq \textstyle\sum_{\ell = 1}^{m_i} v_{\ell i} \qquad q_{ji} \doteq \min{\lbrace|T_{ji}|, m_i\rbrace}  \qquad \mu_{ji} \doteq \textstyle\sum_{\ell = 1}^{q_{ji}} v_{\ell i} \qquad p_{ji} \doteq \textstyle\sum_{t \in T_{ji}} p_{jit}
\end{equation}
From these definitions, $\mu_i$ is the processing power of cluster $i$. For subjob $(j,i)$, $q_{ji}$ is the maximum number of machines that could process the subjob, and $\mu_{ji}$ is the maximum processing power than can be brought to bear on the same. Lastly, $p_{ji}$ is the total processing requirement of subjob $(j,i)$. In these terms, the core linear program, LP1, is as follows.
\begin{align*}
	\text{(LP1) } \min &\textstyle\sum_{j \in N} w_j C_j \\
	s.t.\quad (1A) \quad & \textstyle\sum_{j \in S} p_{ji} C_j 
	        \geq \frac{1}{2} 
	        \left[ 
	           \left(\textstyle\sum_{j \in S} p_{ji}\right)^2/\mu_i
                + \textstyle\sum_{j \in S} p_{ji}^2/\mu_{ji} 
            \right]  \qquad ~\forall S \subseteq N, i \in M \\
	(1B) \quad	& C_j \geq p_{jit}/v_{1i} + r_{ji} \qquad ~\forall  i \in M,~ j \in N,~ t \in T_{ji}\\
	(1C) \quad	& C_j \geq p_{ji}/\mu_{ji} + r_{ji} \qquad ~\forall j \in N,~ i \in M 	 
\end{align*} 

Constraints ($1A$) are more carefully formulated versions of the polyhedral constraints introduced by Queyranne \cite{Queyranne1993} and developed by Schulz \cite{Schulz1996}. The use of $\mu_{ji}$ term is new and allows us to provide stronger performance guarantees for our framework where subjobs are composed of \textit{sets} of tasks. As we will see, this term is one of the primary factors that allows us to parametrize results under varying machine speeds in terms of maximum to \textit{average} machine speed, rather than maximum to \textit{minimum} machine speed. Constraints ($1B$) and ($1C$) are simple lower bounds on job completion time.

The majority of this section is dedicated to proving that LP1 is a valid relaxation of $CC|r|\sum w_jC_j$. Once this is established, we prove the that LP1 can be solved in polynomial time by providing a separation oracle with use in the Ellipsoid method. Both of these proofs use techniques established in Schulz's Ph.D. thesis \cite{Schulz1996}. 

\subsection{Proof of LP1's Validity}




The lemmas below establish the basis for both of our algorithms. Lemma \ref{lem:sumOfSquaresDiffSpeeds} generalizes an inequality used by Schulz \cite{Schulz1996}. Lemma \ref{lem:feasForLP1} relies on Lemma \ref{lem:sumOfSquaresDiffSpeeds} and cites an inequality mentioned in the preceding section (and proven by Queyranne \cite{Queyranne1993}). 
\begin{lemma}
Let $\{a_1,\ldots a_z\}$ be a set of non-negative real numbers. We assume that $k \leq z$ of them are positive. Let $b_i$ be a set of decreasing positive real numbers. Then
\begin{center}
$  \sum_{i = 1}^z a_i^2 / b_i  \geq \left(\sum_{i = 1}^z a_i \right)^2 / \left(\sum_{i = 1}^k b_i\right)$.
\end{center}
\label{lem:sumOfSquaresDiffSpeeds}
\end{lemma}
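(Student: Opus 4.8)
The plan is to reduce the inequality to the Cauchy--Schwarz inequality together with one monotonicity observation about the $b_i$. First I would discard the vanishing terms: since each $a_i \ge 0$ and exactly $k$ of them are positive, set $P \doteq \{\, i : a_i > 0 \,\}$, so $|P| = k$ and
\[
  \sum_{i=1}^z \frac{a_i^2}{b_i} \;=\; \sum_{i \in P} \frac{a_i^2}{b_i}, \qquad \sum_{i=1}^z a_i \;=\; \sum_{i \in P} a_i .
\]
If $k = 0$ both sides of the claimed inequality are $0$ and there is nothing to prove, so assume $k \ge 1$; then $\sum_{i=1}^k b_i > 0$. It thus suffices to show $\sum_{i \in P} a_i^2/b_i \ge \bigl(\sum_{i \in P} a_i\bigr)^2 \big/ \bigl(\sum_{i=1}^k b_i\bigr)$.

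Next I would apply Cauchy--Schwarz to the $k$ terms indexed by $P$: writing $a_i = (a_i/\sqrt{b_i})\,\sqrt{b_i}$ and using $(\sum x_i y_i)^2 \le (\sum x_i^2)(\sum y_i^2)$,
\[
  \Bigl(\,\sum_{i \in P} a_i \Bigr)^{\!2} \;\le\; \Bigl(\,\sum_{i \in P} \frac{a_i^2}{b_i} \Bigr)\Bigl(\,\sum_{i \in P} b_i \Bigr),
\]
so $\sum_{i \in P} a_i^2/b_i \ge \bigl(\sum_{i \in P} a_i\bigr)^2 \big/ \bigl(\sum_{i \in P} b_i\bigr)$. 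Finally, $P$ is a set of $k$ indices chosen from $\{1,\dots,z\}$, and since $b_1 \ge b_2 \ge \cdots \ge b_z > 0$ the sum $\sum_{i \in P} b_i$ of any $k$ of the $b_i$ is bounded above by the sum $\sum_{i=1}^k b_i$ of the $k$ largest ones. Enlarging the denominator on the right-hand side from $\sum_{i \in P} b_i$ to $\sum_{i=1}^k b_i$ only weakens it, which gives the lemma.

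This argument is short and I do not anticipate a genuine obstacle; the two points that deserve a line of care are the degenerate case $k=0$ (handled above) and the bound $\sum_{i \in P} b_i \le \sum_{i=1}^k b_i$. The latter is precisely where the hypothesis that the $b_i$ are arranged in decreasing order is used --- if that ordering is dropped the statement fails, since one could place all the positive $a_i$ on the indices carrying the smallest $b_i$.
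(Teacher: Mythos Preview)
Your proof is correct and uses the same core idea as the paper---writing $a_i = (a_i/\sqrt{b_i})\sqrt{b_i}$ and applying Cauchy--Schwarz. The paper's proof in fact only treats the case $k = z$ and leaves the reduction from general $k$ implicit; your version is more complete in that you explicitly discard the zero $a_i$'s and then invoke the monotonicity of the $b_i$ to pass from $\sum_{i\in P} b_i$ to $\sum_{i=1}^k b_i$. That last step is exactly where the decreasing-order hypothesis is needed, as you note. One minor quibble: when $k=0$ the right-hand side is $0/0$ rather than $0$, so it would be cleaner to simply say the statement is vacuous (or assume $k\ge 1$) rather than asserting both sides equal $0$.
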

\begin{proof}\footnote{The proceedings version of this paper stated that the proof cites the AM-GM inequality and proceeds by induction from $z=k=2$. We have opted here to demonstrate a different (simpler) proof that we discovered only after the proceedings version was finalized.}
We only show the case where $k=z$. Define $\mathbf{a} = [a_1,\ldots,a_k] \in \mathbb{R}^k_+$, $\mathbf{b} = [b_1, \ldots, b_k ] \in \mathbb{R}^k_{++}$, and $\mathbbm{1}$ as the vector of $k$ ones. Now, set $\mathbf{u} = \mathbf{a} / \sqrt{\mathbf{b}}$ and $\mathbf{w} = \sqrt{\mathbf{b}}$ (element-wise), and note that $\langle \mathbf{a}, \mathbbm{1} \rangle = \langle \mathbf{u}, \mathbf{w} \rangle$. In these terms, it is clear that $(\sum_{i=1}^k a_i)^2 = \langle \mathbf{u}, \mathbf{w} \rangle^2$.

Given this, one need only cite Cauchy-Schwarz (namely, $\langle \mathbf{u}, \mathbf{w} \rangle^2 \leq \langle \mathbf{u}, \mathbf{u} \rangle \cdot \langle \mathbf{w}, \mathbf{w} \rangle$) and plug in the definitions of $\mathbf{u}$ and $\mathbf{w}$ to see the desired result.
\end{proof}

\begin{lemma}[Validity Lemma]
Every feasible schedule for an instance $I$ of $CC|r|\sum w_jC_j$ has completion times that define a feasible solution to LP1($I$). \label{lem:feasForLP1}
\end{lemma}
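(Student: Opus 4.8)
The plan is to fix an arbitrary feasible schedule for an instance $I$ and verify that its completion times $\{C_j\}$ satisfy constraints $(1A)$, $(1B)$, and $(1C)$ in turn. Constraints $(1B)$ and $(1C)$ are the easy part and I would dispatch them first: $(1B)$ holds because the single task $t$ of subjob $(j,i)$ cannot start before its release time $r_{ji}$, and even running on the fastest machine (speed $v_{1i}$) it occupies that machine for $p_{jit}/v_{1i}$ units of time, so $C_j \geq r_{ji} + p_{jit}/v_{1i}$. Constraint $(1C)$ is a volume/work argument restricted to a single subjob: subjob $(j,i)$ consists of total work $p_{ji}$, it cannot begin before $r_{ji}$, and at any instant it can be spread across at most $q_{ji}$ machines whose combined speed is at most $\mu_{ji}$; hence the subjob needs at least $p_{ji}/\mu_{ji}$ units of time after $r_{ji}$, giving $C_j \geq r_{ji} + p_{ji}/\mu_{ji}$.

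The substantive work is constraint $(1A)$, and here I would follow the Schulz-style argument adapted to uniform machines. Fix a cluster $i$ and a subset $S \subseteq N$. Drop all release times (they only increase completion times, so an inequality proven with $r_{ji} \equiv 0$ still holds) and consider only the processing of subjobs $(j,i)$ for $j \in S$ on the $m_i$ machines of cluster $i$. Order the jobs of $S$ so that $C_{j_1} \leq C_{j_2} \leq \cdots \leq C_{j_{|S|}}$. For each $k$, look at the prefix $S_k = \{j_1,\ldots,j_k\}$: by time $C_{j_k}$ all of $p_{j_1 i}, \ldots, p_{j_k i}$ units of work from these subjobs have been completed on cluster $i$'s machines, which have total speed $\mu_i$, so one gets the ``total work'' bound $\mu_i C_{j_k} \geq \sum_{\ell \le k} p_{j_\ell i}$. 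Combined with the per-subjob bound $\mu_{j_k i} C_{j_k} \geq p_{j_k i}$ from $(1C)$, I would take a convex combination — multiply the work bound appropriately and the per-subjob bound by $p_{j_k i}$ — to obtain, after summing $p_{j_k i} C_{j_k}$ over $k$, a telescoping-style lower bound of the form $\frac12\big[(\sum_{j\in S} p_{ji})^2/\mu_i + \sum_{j \in S} p_{ji}^2/\mu_{ji}\big]$. The clean way to package the telescoping is to invoke the Queyranne inequality $\sum_{k} p_{j_k i}\sum_{\ell \le k} p_{j_\ell i} \geq \frac12[(\sum_k p_{j_k i})^2 + \sum_k p_{j_k i}^2]$, which is exactly the inequality the excerpt attributes to Queyranne \cite{Queyranne1993} and flags as being cited here.

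The role of Lemma \ref{lem:sumOfSquaresDiffSpeeds} is the remaining wrinkle, and I expect it to be the main technical obstacle — or rather, the place where care is needed. The naive volume bound only gives $\mu_i C_{j_k} \geq \sum_{\ell\le k} p_{j_\ell i}$, but to handle subjobs with few tasks (where at most $q_{ji} < m_i$ machines help) one needs a sharper statement relating the $p_{ji}^2/\mu_{ji}$ terms to a single $(\cdot)^2/(\cdot)$ term; this is precisely what Lemma \ref{lem:sumOfSquaresDiffSpeeds} supplies, with the $a_i$ playing the role of work quantities and the $b_i$ the role of prefix-sums of speeds. So the structure of the proof is: (i) reduce to $r \equiv 0$; (ii) establish $(1B)$, $(1C)$ directly; (iii) for $(1A)$, order $S$ by completion time, derive the two families of lower bounds on $C_{j_k}$, apply Lemma \ref{lem:sumOfSquaresDiffSpeeds} to convert the per-subjob terms into the correct aggregate form, and finish with Queyranne's inequality to absorb the cross terms. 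I would write step (iii) carefully since the bookkeeping with $q_{ji}$, $\mu_{ji}$ versus $\mu_i$ is where an off-by-a-factor error is easiest to make.
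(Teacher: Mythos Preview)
Your handling of $(1B)$ and $(1C)$ is fine and matches the paper's one-line dismissal. The problem is your plan for $(1A)$: the two families of bounds you propose --- the global work bound $\mu_i C_{j_k} \geq \sum_{\ell \leq k} p_{j_\ell i}$ and the per-subjob bound $\mu_{j_k i} C_{j_k} \geq p_{j_k i}$ --- are together not strong enough to recover $(1A)$, no matter how you combine them. Concretely, take a cluster with two unit-speed machines and three jobs, each a single task of size $1$; schedule jobs $1$ and $3$ sequentially on machine $1$ and job $2$ on machine $2$, so $C_1 = C_2 = 1$ and $C_3 = 2$. For $S = \{1,2,3\}$ the right side of $(1A)$ is $\tfrac12\bigl[9/2 + 3\bigr] = 15/4$, and indeed $\sum_j p_{ji} C_j = 4$. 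But your work bounds give $C_{j_1} \geq 1/2$, $C_{j_2} \geq 1$, $C_{j_3} \geq 3/2$ and your per-subjob bounds give $C_{j_k} \geq 1$ for each $k$; even taking the pointwise maximum yields only $\sum_j p_{ji} C_j \geq 1 + 1 + 3/2 = 7/2 < 15/4$. No convex combination, and no downstream application of Lemma~\ref{lem:sumOfSquaresDiffSpeeds} or of the Queyranne identity, can close this gap: the aggregate bounds have already discarded the needed information.

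The paper's argument is structurally different and supplies exactly the missing idea: it decomposes each subjob's work across the machines it actually ran on, writing $p_{ji} = \sum_{\ell} p^\ell_{ji}$ where $p^\ell_{ji}$ is the work of subjob $(j,i)$ placed on machine $\ell$, and then applies Queyranne's single-machine inequality \emph{separately on each machine} $\ell$, using processing times $p^\ell_{ji}/v_{\ell i}$ and per-machine completion times $C^F_{ji\ell}$. This per-machine application of Queyranne is where the full strength comes from. Only after summing those per-machine inequalities does Lemma~\ref{lem:sumOfSquaresDiffSpeeds} enter, and it is applied $|S|+1$ times: once with $a_\ell = \sum_{j \in S} p^\ell_{ji}$ and $b_\ell = v_{\ell i}$ to get $\sum_\ell (\sum_j p^\ell_{ji})^2/v_{\ell i} \geq (\sum_j p_{ji})^2/\mu_i$, and once for each $j \in S$ with $a_\ell = p^\ell_{ji}$ to get $\sum_\ell (p^\ell_{ji})^2/v_{\ell i} \geq p_{ji}^2/\mu_{ji}$ (here $k = q_{ji}$, since subjob $(j,i)$ has at most $q_{ji}$ tasks and hence touches at most $q_{ji}$ machines). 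So your sketch has the role of Lemma~\ref{lem:sumOfSquaresDiffSpeeds} backwards: it is not used to sharpen a global work bound, but to coarsen per-machine sum-of-squares into cluster-level and subjob-level quantities. The per-machine decomposition is the step you are missing.
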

\begin{proof}
As constraints ($1B$) and ($1C$) are clear lower bounds on job completion time, it suffices to show the validity of constraint ($1A$). Thus, let $S$ be a non-empty subset of $N$, and fix an arbitrary but feasible schedule ``$F$'' for $I$. 

Define $C^{F}_{ji}$ as the completion time of subjob $(j,i)$ under schedule $F$. Similarly, define $C^{F}_{ji\ell}$ as the first time at which tasks of subjob $(j,i)$ scheduled on machine $\ell$ of cluster $i$ are finished. Lastly, define $p^{\ell}_{ji}$ as the total processing requirement of job $j$ scheduled on machine $\ell$ of cluster $i$. Note that by construction, we have $C^{F}_{ji} = \max_{\ell \in \{1,\ldots,m_i\}}{C^{F}_{ji\ell}}$ and $C^F_j = \max_{i \in M}{C^F_{ji}}$. 
Since $p_{ji} = \sum_{\ell = 1}^{m_i} p^{\ell}_{ji}$, we can rather innocuously write
\begin{equation}
\textstyle\sum_{j \in S} p_{ji} C^{F}_{ji} = \textstyle\sum_{j \in S}\left[ \textstyle\sum_{\ell  = 1}^{m_i}  p^{\ell}_{ji}  \right] C^{F}_{ji} . 
\end{equation} 
But using $C^{F}_{ji} \geq C^{F}_{ji\ell}$, we can lower-bound $\sum_{j \in S} p_{ji} C^{F}_{ji}$. Namely,
\begin{equation}
\textstyle\sum_{j \in S} p_{ji} C^{F}_{ji} \geq \textstyle\sum_{j \in S}\textstyle\sum_{\ell = 1}^{m_i}  p^{\ell}_{ji} C^{F}_{ji\ell} =  \textstyle\sum_{\ell = 1}^{m_i} v_{\ell i}\textstyle\sum_{j \in S} \left[p^{\ell}_{ji}/v_{\ell i} \right]C^{F}_{ji\ell} \label{eq:specificMachines}
\end{equation}
The next inequality uses a bound on $\textstyle\sum_{j \in S}\left[p^{\ell}_{ji}/v_{\ell i} \right]C^{F}_{ji\ell}$ proven by Queyranne \cite{Queyranne1993} for any subset $S$ of $N$ jobs with processing times $\left[p^{\ell}_{ji}/v_{\ell i} \right]$ to be scheduled on a single machine.\footnote{Here, our machine is machine $\ell$ on cluster $i$.}
\begin{equation}
\textstyle\sum_{j \in S} \left[p^{\ell}_{ji}/v_{\ell i} \right]C^{F}_{ji\ell} \geq \frac{1}{2} \left[\left(\textstyle\sum_{j \in S} \left[p^{\ell}_{ji}/v_{\ell i} \right]\right)^2 + \textstyle\sum_{j \in S} \left(
\left[p^{\ell}_{ji}/v_{\ell i} \right]\right)^2 \right]\label{eq:queyranne}
\end{equation}
Combining inequalities \eqref{eq:specificMachines} and \eqref{eq:queyranne}, we have the following.
\begin{align}
\textstyle\sum_{j \in S} p_{ji} C^{F}_{ji} &\geq \frac{1}{2} \textstyle\sum_{\ell=1}^{m_i} v_{\ell i} \left[\left(\textstyle\sum_{j \in S} \left[p^{\ell}_{ji}/v_{\ell i} \right]\right)^2 + \textstyle\sum_{j \in S} \left(
\left[p^{\ell}_{ji}/v_{\ell i} \right]\right)^2 \right]  \\
& \geq \frac{1}{2} \left[\textstyle\sum_{\ell = 1}^{m_i}\left(\sum_{j \in S} p^\ell_{j i}\right)^2 / v_{\ell i} + \sum_{j \in S} \sum_{\ell = 1}^{m_i} \left(p^\ell_{j i}\right)^2 / v_{\ell i} \right] \label{eq:differentBoundForLP1}
\end{align}
Next, we apply Lemma \ref{lem:sumOfSquaresDiffSpeeds} to the right hand side of inequality \eqref{eq:differentBoundForLP1} a total of $|S|+1$ times.
\begin{align}
&\textstyle\sum_{\ell = 1}^{m_i} \left(\textstyle\sum_{j \in S} p^\ell_{j i}\right)^2 /v_{\ell i} \geq \left(\textstyle\sum_{\ell = 1}^{m_i}\textstyle\sum_{j \in S} p^\ell_{j i}\right)^2/ \textstyle\sum_{\ell = 1}^{m_i} v_{\ell i} = \left(\textstyle\sum_{j \in S} p_{ji}\right)^2 / \mu_i \\
&\textstyle\sum_{\ell = 1}^{m_i} \left(p^\ell_{j i}\right)^2 / v_{\ell i} \geq \left(\sum_{\ell = 1}^{m_i}p^\ell_{j i}\right)^2 / \textstyle\sum_{\ell = 1}^{q_{j i}} v_{\ell i} = p_{j i}^2/\mu_{j i} ~~\forall~ j \in S
\end{align}
Citing $C^{F}_{j} \geq C^{F}_{ji}$, we arrive at the desired result.
\begin{equation}
\textstyle\sum_{j \in S} p_{ji} C^{F}_{j} \geq \frac{1}{2}\left[\left(\textstyle\sum_{j \in S} p_{ji} \right)^2/\mu_i + \textstyle\sum_{j \in S}p_{ji}^2/\mu_{ji}\right] \qquad  \text{``constraint }(1A)\text{''}
\end{equation}

\end{proof}

\subsection{Theoretical Complexity of LP1}\label{subsec:thankGodPolyTime}
%
%
As the first of our two algorithms requires solving LP1 directly, we need to address the fact that LP1 has $m \cdot (2^n - 1) + n$ constraints. 
Luckily, it is still possible to such solve linear programs in polynomial time with the Ellipsoid method; we introduce the following separation oracle for this purpose.

\begin{definition}[Oracle LP1]
Define the \textit{violation}
\begin{equation}
V(S,i) = \frac{1}{2} \left[ 
	           \left(\textstyle\sum_{j \in S} p_{j i}\right)^2/\mu_i 
                + \textstyle\sum_{j \in S} p_{j i}^2/\mu_{j i} 
            \right] - \textstyle\sum_{j \in S} p_{j  i} C_{j}
\end{equation}
Let $\{C_j\} \in \mathbb{R}^n$ be a \textit{potentially} feasible solution to LP1. Let $\sigma_i$ denote the ordering when jobs are sorted in increasing order of $C_j - p_{j i}/(2\mu_{ji})$. Find the most violated constraint in $(1A)$ for $i \in M$ by searching over $V(S_i,i)$ for $S_i$ of the form $\{\sigma_i(1),\ldots,\sigma_i(j-1),\sigma_i(j)\},~ j \in \{1,\ldots,n\}$. If any of maximal $V(S_i^*,i) > 0$, then return $(S_i^*,i)$ as a violated constraint for ($1A$). Otherwise, check the remaining $n$ constraints $((1B)$ and $(1C))$ directly in linear time.
\end{definition} 

For fixed $i$, Oracle-LP1 finds the subset of jobs that maximizes ``violation'' for cluster $i$. That is, Oracle-LP1 finds $S_i^*$ such that $V(S_i^*,i) = \text{max}_{S \subset N} V(S,i)$. We prove the correctness of Oracle-LP1 by establishing a necessary and sufficient condition for a job $j$ to be in $S_i^*$.

\begin{lemma}
For $\mathbb{P}_i(A) \doteq \textstyle\sum_{j \in A} p_{ji}$, we have $x \in S_i^* \Leftrightarrow$ $ C_x - p_{xi}/(2\mu_{xi}) \leq \mathbb{P}_i(S_i^*)/\mu_i$. 
\label{lem:separation}
\end{lemma}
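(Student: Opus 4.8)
The statement is a characterization of which jobs belong to the violation-maximizing set $S_i^*$. I would prove it via a standard exchange argument: the condition "$x \in S_i^* \iff C_x - p_{xi}/(2\mu_{xi}) \le \mathbb{P}_i(S_i^*)/\mu_i$" should drop out of comparing $V(S_i^*,i)$ with $V(S_i^* \cup \{x\},i)$ (when $x \notin S_i^*$) and with $V(S_i^* \setminus \{x\},i)$ (when $x \in S_i^*$), and using optimality of $S_i^*$. Note that in what follows I treat cluster $i$ as fixed and suppress it from the notation where convenient, writing $p_x$ for $p_{xi}$, $\mu = \mu_i$, $\mu_x = \mu_{xi}$, and $\mathbb{P}(A) = \sum_{j\in A} p_{ji}$.

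First I would record the \emph{discrete derivative} of $V$. For any set $A$ and any $x \notin A$, a direct expansion of
\[
V(A\cup\{x\},i) - V(A,i)
= \tfrac12\Bigl[ \bigl(\mathbb{P}(A)+p_x\bigr)^2/\mu - \mathbb{P}(A)^2/\mu + p_x^2/\mu_x\Bigr] - p_x C_x
\]
gives
\[
V(A\cup\{x\},i) - V(A,i)
= p_x\Bigl[ \mathbb{P}(A)/\mu + \tfrac{p_x}{2\mu} + \tfrac{p_x}{2\mu_x} - C_x\Bigr].
\]
Since $p_x > 0$, the sign of this increment is the sign of $\mathbb{P}(A)/\mu + p_x/(2\mu) + p_x/(2\mu_x) - C_x$. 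Equivalently, adding $x$ to $A$ strictly helps iff $C_x - p_x/(2\mu_x) < \mathbb{P}(A)/\mu + p_x/(2\mu)$, and removing $x$ from $A \cup \{x\}$ strictly helps iff $C_x - p_x/(2\mu_x) > \mathbb{P}(A)/\mu + p_x/(2\mu)$.

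Now I would finish both directions. ($\Leftarrow$) Suppose $x \notin S_i^*$ but $C_x - p_x/(2\mu_x) \le \mathbb{P}(S_i^*)/\mu$. Then, taking $A = S_i^*$ in the increment formula, the bracket equals $\mathbb{P}(S_i^*)/\mu + p_x/(2\mu) + p_x/(2\mu_x) - C_x \ge p_x/(2\mu) + p_x/(2\mu_x) > 0$ strictly (using $p_x>0$), so $V(S_i^*\cup\{x\},i) > V(S_i^*,i)$, contradicting maximality; hence $x \in S_i^*$. ($\Rightarrow$) Suppose $x \in S_i^*$ but $C_x - p_x/(2\mu_x) > \mathbb{P}(S_i^*)/\mu$. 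Put $A = S_i^*\setminus\{x\}$, so $\mathbb{P}(A) = \mathbb{P}(S_i^*) - p_x$. By the increment formula (read in reverse), $V(S_i^*,i) - V(A,i) = p_x[\mathbb{P}(A)/\mu + p_x/(2\mu) + p_x/(2\mu_x) - C_x]$. The bracket is $\mathbb{P}(S_i^*)/\mu - p_x/\mu + p_x/(2\mu) + p_x/(2\mu_x) - C_x = \mathbb{P}(S_i^*)/\mu - p_x/(2\mu) + p_x/(2\mu_x) - C_x$. Using the assumption $C_x - p_x/(2\mu_x) > \mathbb{P}(S_i^*)/\mu$, this is strictly less than $-p_x/(2\mu) < 0$, so $V(A,i) > V(S_i^*,i)$, again contradicting maximality; hence $C_x - p_x/(2\mu_x) \le \mathbb{P}(S_i^*)/\mu$.

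The only real subtlety — and the step I expect to require the most care — is the boundary/tie handling: the forward direction gives a weak inequality $\le$ while the reverse-exchange argument naturally produces a strict one, so I need to make sure the inequalities point the right way and that the empty-set case ($S_i^* = \emptyset$) and the case $x$ being the only element are not pathological. (If $S_i^* = \emptyset$, then for every $x$ the increment bracket is $p_x/(2\mu)+p_x/(2\mu_x) - C_x$; optimality of $\emptyset$ forces this to be $\le 0$ for all $x$, which is consistent with the claimed characterization since $\mathbb{P}(\emptyset)/\mu = 0$.) Everything else is the routine quadratic expansion above. As a remark, this lemma is exactly what justifies Oracle-LP1: the characterization says $S_i^*$ is downward-closed in the order by $C_j - p_{ji}/(2\mu_{ji})$, so it suffices to check the $n$ prefixes of $\sigma_i$.
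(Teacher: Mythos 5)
Your proof is correct and follows essentially the same route as the paper: both establish the quadratic ``discrete derivative'' identity for $V(S\cup\{x\},i)-V(S,i)$ (the paper's equations relating $V(S,i)$ to $V(S\setminus x,i)$ and $V(S\cup x,i)$) and then apply it at $S=S_i^*$ with an exchange/optimality argument, implicitly restricting to jobs with $p_{xi}>0$ as the paper does w.l.o.g. One tiny slip: under the hypothesis $C_x - p_x/(2\mu_x)\le \mathbb{P}(S_i^*)/\mu$ the increment bracket is bounded below by $p_x/(2\mu)$, not $p_x/(2\mu)+p_x/(2\mu_x)$, but strict positivity---all that is needed---still holds, so the argument is unaffected.
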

\begin{proof}
For given $S$ (not necessarily equal to $S_i^*$), it is useful to express $V(S,i)$ in terms of $V(S\cup x, i)$ or $V(S\setminus x, i)$ (depending on whether $x \in S$ or $x \in N \setminus S$). Without loss of generality, we restrict our search to $S : x \in S \Rightarrow p_{x,i} > 0$.

Suppose $ x \in S$. By writing $\mathbb{P}_i(S) = \mathbb{P}_i(S\setminus x) + \mathbb{P}_i(x)$, and similarly decomposing the sum $\textstyle\sum_{j \in S} p_{j i}^2/(2\mu_{ji})$, one can show the following.
\begin{align}
V(S, i) = & V(S\setminus x, i) + p_{xi}\left(\frac{1}{2}\left(\frac{2\mathbb{P}_i(S) - p_{xi}}{\mu_i} + \frac{p_{xi}}{\mu_{xi}}\right) - C_x \right) \label{eq:xInS}
\end{align}
Now suppose $ x \in N\setminus S$. In the same strategy as above (this time writing $ \mathbb{P}_i(S) = \mathbb{P}_i(S\cup x) - \mathbb{P}_i(x)$), one can show that
\begin{align}
V(S, i) =&  V(S\cup x, i) + p_{xi}\left(C_x - \frac{1}{2}\left(\frac{2\mathbb{P}_i(S) + p_{xi}}{\mu_i} + \frac{p_{xi}}{\mu_{xi}}\right) \right). \label{eq:xNotInS}
\end{align}
Note that Equations \eqref{eq:xInS} and \eqref{eq:xNotInS} hold for all $S$, including $S = S_i^*$. Turning our attention to $S_i^*$, we see that $x \in S_i^*$ implies that the second term in Equation \eqref{eq:xInS} is non-negative,  i.e. 
\begin{equation}
C_x - p_{xi}/(2\mu_{xi}) \leq \left(2\mathbb{P}_i(S_i^*) - p_{xi}\right)/(2\mu_i) < \mathbb{P}_i(S_i^*)/\mu_i.
\end{equation}
Similarly, $x \in N \setminus S_i^*$ implies the second term in Equation \eqref{eq:xNotInS} is non-negative.
\begin{equation}
C_x - p_{x i}/(2\mu_{x i}) \geq \left(2\mathbb{P}_i(S_i^*) + p_{x i}\right)/(2\mu_i) \geq \mathbb{P}_i(S_i^*)/\mu_i
\end{equation}
It follows that $x \in S_i^*$ iff $C_x - p_{x i}/(2\mu_{x i}) < \mathbb{P}_i(S_i^*)/\mu_i$.
\end{proof}

Given Lemma \ref{lem:separation}, It is easy to verify that sorting jobs in increasing order of $C_x - p_{xi}/(2\mu_{xi})$ to define a permutation $\sigma_i$ guarantees that $S_i^*$ is of the form $\{\sigma_i(1),\ldots,\sigma_i(j-1),\sigma_i(j)\}$ for some $j \in N$. This implies that for fixed $i$, Oracle-LP1 finds $S_i^*$ in $O(n \log(n))$ time. This procedure is executed once for each cluster, leaving the remaining $n$ constraints in $(1B)$ and $(1C)$ to be verified in linear time. Thus Oracle-LP1 runs in $O(mn\log(n))$ time.

By the equivalence of separation and optimization, we have proven the following theorem:
\begin{theorem}
LP1($I$) is a valid relaxation of $I \in \Omega_{CC}$, and is solvable in polynomial time. \label{thm:LP1feasAndSolve}
\end{theorem}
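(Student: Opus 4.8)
The plan is to obtain Theorem~\ref{thm:LP1feasAndSolve} by assembling the results already established in this section; the statement is essentially a corollary of Lemma~\ref{lem:feasForLP1}, Lemma~\ref{lem:separation}, and the equivalence of separation and optimization.

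\emph{Validity.} This is immediate from Lemma~\ref{lem:feasForLP1}: the completion-time vector $\{C^F_j\}$ of any feasible schedule $F$ for $I$ satisfies (1A), (1B), and (1C), so every schedule induces an LP1($I$)-feasible point with the same objective value, and hence the optimum of LP1($I$) lower-bounds $\sum_{j \in N} w_j C_j$ over all schedules. Moreover LP1($I$) is nonempty --- witnessed, for instance, by the serial schedule that runs each subjob's tasks consecutively on machine $1$ of its cluster --- and since (1C) forces $C_j \geq 0$, the objective is bounded below by $0$; thus an optimal solution of LP1($I$) exists.

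\emph{Polynomial solvability.} Here I would invoke the equivalence of separation and optimization, with Oracle-LP1 playing the role of the separation routine. Its correctness is precisely the content of Lemma~\ref{lem:separation}: for a candidate point $\{C_j\}$ and a fixed cluster $i$, membership in the violation-maximizing set $S_i^*$ is governed by the threshold test $C_x - p_{xi}/(2\mu_{xi}) < \mathbb{P}_i(S_i^*)/\mu_i$; as the left-hand side is monotone in the sorting key $C_x - p_{xi}/(2\mu_{xi})$ while the right-hand side is a single (a priori unknown) constant, $S_i^*$ must be a prefix of the order $\sigma_i$. Consequently, evaluating the violation $V(\cdot,i)$ on each of the $n$ prefixes of $\sigma_i$ and retaining the best locates $S_i^*$ exactly, in $O(n\log n)$ time; doing this for each of the $m$ clusters and, when no (1A) inequality is violated, checking the $n$ inequalities (1B)--(1C) directly shows Oracle-LP1 is a correct separation oracle running in $O(mn\log n)$ time. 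Since the defining inequalities of LP1($I$) are built from the input data by sums, products, and reciprocals of the $\mu$-quantities, LP1($I$) is a well-described polyhedron of polynomially bounded facet complexity; the standard equivalence of separation and optimization (via the Ellipsoid method) then yields an optimal solution of LP1($I$) in time polynomial in the input size.

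\emph{Expected obstacle.} Given Lemmas~\ref{lem:feasForLP1} and~\ref{lem:separation}, there is essentially no remaining obstacle of substance: the only point requiring care is the routine verification that LP1($I$) meets the ``well-described polyhedron'' hypotheses needed to convert the separation oracle into an optimization algorithm (a polynomial bound on the encoding length of its inequalities, and hence of its vertices), which is standard and not particular to this LP.
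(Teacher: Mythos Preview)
Your proposal is correct and follows essentially the same approach as the paper: validity via Lemma~\ref{lem:feasForLP1}, then the prefix-structure consequence of Lemma~\ref{lem:separation} to certify Oracle-LP1 as a correct $O(mn\log n)$ separation routine, followed by the equivalence of separation and optimization. The extra care you take regarding nonemptiness, boundedness, and the well-described-polyhedron hypotheses is standard bookkeeping the paper leaves implicit.
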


As was explained in the beginning of this section, linear programs such as those in \cite{Chen2000, Garg2007, llp, Queyranne1993, Schulz1996, Schulz2012} are processed with an appropriate sorting of the optimal decision variables $\{C^\star_j\}$. It is important then to have bounds on job completion times for a particular ordering of jobs. We address this next in Section \ref{sec:listSched}, and reserve our first algorithm for Section \ref{sec:lpAlg}.
\section{List Scheduling from Permutations}\label{sec:listSched}




The complex work in both of our proposed algorithms is to generate a \textit{permutation} of jobs. The procedure below takes such a permutation and uses it to determine start times, end times, and machine assignments for every task of every subjob.
\vspace{1em}

\noindent \textbf{List-LPT} : Given a single cluster with $m_i$ machines and a permutation of jobs $\sigma$, introduce $\text{List}(a,i) \doteq (p_{ai1}, p_{ai2},\ldots,p_{ai|T_{ai}|})$ as an ordered set of tasks belonging to subjob $(a,i)$, ordered by longest processing time first. Now define $\text{List}(\sigma) \doteq \text{List}(\sigma(1),i) \oplus \text{List}(\sigma(2),i) \oplus \cdots \oplus \text{List}(\sigma(n),i)$, where $\oplus$ is the concatenation operator. 

Place the tasks of $\text{List}(\sigma)$ in order- from the largest task of subjob $(\sigma(1),i)$, to the smallest task of subjob $(\sigma(n),i)$. When placing a particular task, assign it whichever machine and start time results in the task being completed as early as possible (without moving any tasks which have already been placed). Insert idle time (on all $m_i$ machines) as necessary if this procedure would otherwise start a job before its release time.
\vspace{1em}

The following Lemma is essential to bound the completion time of a set of jobs processed by List-LPT. The proof is adapted from Gonzalez et al. \cite{Gonzalez1977}.
\begin{lemma}
Suppose $n$ jobs are scheduled on cluster $i$ according to List-LPT($\sigma$). Then for $ \bar{v_i} \doteq \mu_i/m_i$, the completion time of subjob $(\sigma(j),i)$ $($denoted $C_{\sigma(j)i}$ $)$ satisfies
\begin{align}
&C_{\sigma(j)i} \leq \max_{1\leq k \leq j}{r_{\sigma(k)i}} + p_{\sigma(j)i1}/\bar{v_i} +  \left(\textstyle\sum_{k=1}^{j} p_{\sigma(k)i} - p_{\sigma(j)i1}\right)/\mu_i \label{eq:generalGonzalezLemma}
\end{align} \label{lem:CompTimesOnUniformMachines}
\end{lemma}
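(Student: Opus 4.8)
The plan is to run the classical critical-task argument for LPT on uniform machines (Gonzalez et al.), adapted to our subjobs-with-tasks setting and to release times. Let $\tau$ be the task of subjob $(\sigma(j),i)$ that completes last in List-LPT$(\sigma)$; say $\tau$ is processed on machine $\ell$ of cluster $i$, starts at time $s$, and has processing requirement $p_\tau$, so that $C_{\sigma(j)i}=s+p_\tau/v_{\ell i}$. Since the tasks of a subjob are listed by decreasing processing time, $p_\tau\le p_{\sigma(j)i1}$. Write $r^\star\doteq\max_{1\le k\le j}r_{\sigma(k)i}$ and $P\doteq\sum_{k=1}^{j}p_{\sigma(k)i}$. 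If $C_{\sigma(j)i}\le r^\star$ the claimed bound holds trivially (its other two terms are nonnegative), so assume $C_{\sigma(j)i}>r^\star$. It then suffices to prove
\[
C_{\sigma(j)i}\ \le\ r^\star+\frac{P+(m_i-1)\,p_\tau}{\mu_i},
\]
because substituting $p_\tau\le p_{\sigma(j)i1}$ and $\mu_i=m_i\bar{v_i}$ rewrites the right-hand side as $r^\star+p_{\sigma(j)i1}/\bar{v_i}+(P-p_{\sigma(j)i1})/\mu_i$, which is exactly inequality \eqref{eq:generalGonzalezLemma}.

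First I would read off what List-LPT's greedy rule implies at the instant $\tau$ is placed. At that instant only tasks of subjobs $\sigma(1),\dots,\sigma(j)$ have been placed, and each of these subjobs has release time at most $r^\star$. For a machine $\ell'$ let $a_{\ell'}$ be the current end of its load. Placing $\tau$ at the end of machine $\ell'$ would complete it at time at most $\max(a_{\ell'},r^\star)+p_\tau/v_{\ell' i}$, so this quantity must be $\ge C_{\sigma(j)i}$; rearranging, the amount of work machine $\ell'$ performs after time $r^\star$ is at least
\[
v_{\ell' i}\,(a_{\ell'}-r^\star)\ \ge\ v_{\ell' i}\,(s-r^\star)+p_\tau\!\left(\frac{v_{\ell' i}}{v_{\ell i}}-1\right),
\]
and when $a_{\ell'}<r^\star$ the same greedy inequality instead forces the right-hand side above to be $\le 0$, so it still (vacuously) lower-bounds that machine's post-$r^\star$ workload.

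The second ingredient is a volume count over $[r^\star,\infty)$. I would use the structure of List-LPT — the only idle time it ever inserts is inserted to meet a release time, and all relevant release times are $\le r^\star$ — to conclude that every idle interval present when $\tau$ is placed lies below $r^\star$. Hence, since $C_{\sigma(j)i}>r^\star$, the task $\tau$ was appended (so $s=a_\ell$), machine $\ell$ runs continuously on $[r^\star,C_{\sigma(j)i})$, and every other machine $\ell'$ runs continuously on $[r^\star,a_{\ell'})$, all on tasks of $\sigma(1),\dots,\sigma(j)$. The total processing requirement of all such tasks is $P$, so the sum over $\ell'$ of the post-$r^\star$ workloads is at most $P$; combining this with the per-machine bound of the previous paragraph (the $\ell'=\ell$ cross term vanishes, and $\tau$ itself contributes an extra $p_\tau$ on machine $\ell$), and using $\sum_{\ell'}v_{\ell' i}=\mu_i$ to telescope, gives
\[
P\ \ge\ \mu_i\,(s-r^\star)+p_\tau\!\left(\frac{\mu_i}{v_{\ell i}}-(m_i-1)\right).
\]
Solving for $s-r^\star$ and then adding $p_\tau/v_{\ell i}=C_{\sigma(j)i}-s$ to both sides yields the displayed target inequality, and hence \eqref{eq:generalGonzalezLemma}.

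I expect the main obstacle to be this bookkeeping about idle time: rigorously justifying that no machine carries ``wasted'' idle inside $[r^\star,a_{\ell'})$ and that, when $C_{\sigma(j)i}>r^\star$, the critical task $\tau$ is appended to a machine's load rather than slotted into an earlier gap (which requires tracking how List-LPT's forced release-time idle interacts with its freedom to place a short task into a pre-existing gap). Once that structural fact is in hand, the greedy inequality, the telescoping of $\sum_{\ell'}v_{\ell' i}$, and the final substitution $p_\tau\le p_{\sigma(j)i1}$ are all routine algebra. The only other point needing explicit care is the boundary case $a_{\ell'}<r^\star$ flagged above, which the greedy inequality disposes of.
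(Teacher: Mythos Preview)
Your proposal is correct and follows the same critical-task/greedy-inequality argument as the paper (adapted from Gonzalez et al.), so the overall strategy matches. The organization differs in two ways worth noting.

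First, the paper sidesteps your idle-time bookkeeping for release times entirely: it simply observes that one obtains an \emph{upper} bound on $C_{\sigma(j)i}$ by pretending all $m_i$ machines sit idle until time $r^\star=\max_{1\le k\le j}r_{\sigma(k)i}$ and then running the zero-release-time argument. This reduces the release-time case to the $r\equiv 0$ case in one line, avoiding any discussion of gaps or appending-versus-slotting.

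Second, for the zero-release case the paper uses a small truncation trick you do not: if $t^\ast$ is the last-finishing task of subjob $(\sigma(j),i)$ but not the smallest task in $T_{\sigma(j)i}$, it replaces $T_{\sigma(j)i}$ by $\{t:p_{\sigma(j)it}\ge p_{\sigma(j)it^\ast}\}$. Because tasks within a subjob are placed in LPT order, this does not change when or where $t^\ast$ is placed, but it makes $t^\ast$ the very last task scheduled among subjobs $\sigma(1),\dots,\sigma(j)$. Then the per-machine greedy inequality reads simply $C_{\sigma(j)i}\,v_{\ell i}\le D_\ell^{j}+p_{\sigma(j)it^\ast}$ with $D_\ell^j$ the total demand already on machine $\ell$, and summing over $\ell$ immediately gives your target inequality without tracking which tasks of $\sigma(j)$ have or have not yet been placed. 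Your direct argument without truncation works too, but the paper's device removes exactly the bookkeeping you flagged as the main obstacle.
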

\begin{proof}
For now, assume all jobs are released at time zero. Let the task of subjob $(\sigma(j),i)$ to finish last be denoted $t^*$. If $t^*$ is not the task in $T_{\sigma(j)i}$ with least processing time, then construct a new set $T'_{\sigma(j)i} = \{ t :  p_{\sigma(j)it^*} \leq p_{\sigma(j)it} \} \subset T_{\sigma(j)i}$. Because the tasks of subjob $(\sigma(j),i)$ were scheduled by List-LPT (i.e. longest-processing-time-first), the sets of potential start times and machines for task $t^*$ (and hence the set of potential completion times for task $t^*$) are the same regardless of whether subjob $(\sigma(j),i)$ consisted of tasks $T_{\sigma(j)i}$ or the subset $T'_{\sigma(j)i}$. Accordingly, reassign $T_{\sigma(j)i} \leftarrow T'_{\sigma(j)i}$ without loss of generality.

Let $D_{\ell}^j$ denote the total demand for machine $\ell$ (on cluster $i$) once all tasks of subjobs $(\sigma(1),i)$ through $(\sigma(j-1),i)$ and all tasks in the set $T_{\sigma(j)i}\setminus \{t^*\}$ are scheduled. Using the fact that $C_{\sigma(j)i}v_{\ell i} \leq ({D}_{\ell}^{j} + p_{\sigma(j)i t^*}) \forall \ell \in \{1,\ldots,m_i\}$, sum the left and right and sides over $\ell$. This implies $C_{\sigma(j)i}\left( \textstyle\sum_{\ell = 1}^{m_i} v_{\ell i} \right) \leq ~ m_i p_{\sigma(j) i t^*} + \textstyle\sum_{\ell = 1}^{m_i} {D}_{\ell}^{j}$. Dividing by the sum of machine speeds and using the definition of $\mu_i$ yields
\begin{equation}
C_{\sigma(j)i} 
	~ \leq ~  m_i p_{\sigma(j)i t^*}/\mu_i + \textstyle\sum_{\ell = 1}^{m_i} {D}_{\ell}^j /\mu_i ~ \leq ~ p_{\sigma(j)i 1}/\bar{v_i} + \left(\textstyle\sum_{k = 1}^{j} p_{\sigma(k)i} - p_{\sigma(j)i1}\right)/\mu_i \label{eq:mainGonzalezLemma}
\end{equation}
where we estimated $p_{\sigma(j)i t^*}$ upward by $p_{\sigma(j)i 1}$. Inequality \eqref{eq:mainGonzalezLemma} completes our proof in the case when $r_{ji} \equiv 0$. 

Now suppose that some $r_{ji} > 0$. We take our policy to the extreme and suppose that all machines are left idle until every one of jobs $\sigma(1)$ through $\sigma(j)$ are released; note that this occurs precisely at time $\max_{1 \leq k \leq j} r_{\sigma(k)i}$. It is clear that beyond this point in time, we are effectively in the case where all jobs are released at time zero, hence we can bound the remaining time to completion by the right hand side of Inequality \ref{eq:mainGonzalezLemma}. As Inequality \ref{eq:generalGonzalezLemma} simply adds these two terms, the result follows.
\end{proof}

Lemma \ref{lem:CompTimesOnUniformMachines} is cited directly in the proof of Theorem \ref{thm:uniformLP1} and Lemma \ref{lem:TSPTboundInTermsOfPDandOPT}. Lemma \ref{lem:CompTimesOnUniformMachines} is used implicitly in the proofs of Theorems \ref{thm:identLP}, \ref{thm:identLP_2appxWithConstantTasks}, and \ref{thm:tspt_unit_tasks}.
\section{An LP-based Algorithm}\label{sec:lpAlg}

%
%

In this section we show how LP1 can be used to construct near optimal schedules for concurrent cluster scheduling both when $r_{ji} \equiv 0$ and when some $r_{ji} > 0$. Although solving LP1 is somewhat involved, the algorithm itself is quite simple:
\vspace{0.5em}

\noindent \textbf{Algorithm CC-LP} : Let $I = (T, r, w, v)$ denote an instance of $CC | r | \sum w_j C_j$. Use the optimal solution $\{C_j^\star\}$ of LP1($I$) to define $m$ permutations $\{\sigma_i : i \in M\}$ which sort jobs in increasing order of $C^\star_j - p_{ji}/(2\mu_{ji})$. For each cluster $i$, execute List-LPT($\sigma_i$).
\vspace{0.5em}

Each theorem in this section can be characterized by how various assumptions help us cancel an additive term\footnote{``$+p_{xit^*}$''; see associated proofs.} in an upper bound for the completion time of an arbitrary subjob $(x,i)$. Theorem \ref{thm:uniformLP1} is the most general, while Theorem \ref{thm:identLP_2appxWithConstantTasks} is perhaps the most surprising.

\subsection{CC-LP for Uniform Machines}\label{sec:unifLP}
\begin{theorem}
Let $\hat{C}_j$ be the completion time of job $j$ using algorithm CC-LP, and let $R$ be as in Section \ref{subsec:outlineAndResults}. If $r_{ji} \equiv 0$, then $ \textstyle\sum_{j \in N} w_j \hat{C}_j \leq \left(2 + R\right)OPT $. Otherwise, $ \textstyle\sum_{j \in N} w_j \hat{C}_j \leq \left(3 + R\right)OPT$.
\label{thm:uniformLP1}
\end{theorem}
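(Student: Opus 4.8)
The plan is to bound each job's completion time $\hat C_j$ under CC-LP in terms of the LP-optimal values $\{C_j^\star\}$, and then sum with weights. Fix a job $x$ and a cluster $i$, and let $\sigma_i$ be the permutation used at cluster $i$ (sorting jobs by increasing $C_j^\star - p_{ji}/(2\mu_{ji})$); say $x = \sigma_i(j)$, so the jobs scheduled no later than $x$ at cluster $i$ form the prefix set $S = \{\sigma_i(1),\dots,\sigma_i(j)\}$. Since $r_{ji}\equiv 0$ in the first case, Lemma~\ref{lem:CompTimesOnUniformMachines} gives
\[
\hat C_{xi} \;\le\; p_{xi1}/\bar v_i \;+\; \Bigl(\textstyle\sum_{k\le j} p_{\sigma_i(k)i} - p_{xi1}\Bigr)/\mu_i .
\]
The term $\sum_{k\le j} p_{\sigma_i(k)i} = \mathbb P_i(S)$ is exactly the kind of quantity that constraint $(1A)$ controls: applying $(1A)$ to the prefix set $S$ and using that $C_k^\star - p_{ki}/(2\mu_{ki})$ is increasing along $\sigma_i$ (so $C_x^\star - p_{xi}/(2\mu_{xi}) \ge C_k^\star - p_{ki}/(2\mu_{ki})$ for all $k\in S$, hence $\sum_{k\in S} p_{ki}\bigl(C_x^\star - p_{xi}/(2\mu_{xi})\bigr) \ge \sum_{k\in S} p_{ki} C_k^\star - \tfrac12\sum_{k\in S}p_{ki}^2/\mu_{ki}$, which combined with $(1A)$ is $\ge \tfrac12 \mathbb P_i(S)^2/\mu_i$), one gets
\[
C_x^\star \;\ge\; \mathbb P_i(S)/(2\mu_i) \;+\; p_{xi}/(2\mu_{xi}),
\]
which is the standard "prefix-sum" bound underlying all $\sum w_jC_j$ LP-rounding arguments. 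This yields $\mathbb P_i(S)/\mu_i \le 2C_x^\star$, and plugging back,
\[
\hat C_{xi} \;\le\; 2C_x^\star \;+\; p_{xi1}/\bar v_i \;-\; p_{xi1}/\mu_i \;\le\; 2C_x^\star \;+\; p_{xi1}/\bar v_i .
\]
Finally $p_{xi1}/\bar v_i = (p_{xi1}/v_{1i})\cdot(v_{1i}/\bar v_i) \le R_i \cdot p_{xi1}/v_{1i} \le R\cdot C_x^\star$ by constraint $(1B)$ and the definition of $R$. Taking the max over $i$ gives $\hat C_x = \max_i \hat C_{xi} \le (2+R)C_x^\star$ for every $x$; multiplying by $w_x$, summing, and using $\sum_x w_x C_x^\star \le OPT$ (Lemma~\ref{lem:feasForLP1}) gives the first claim.

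For the case with release times, the only change is the extra term $\max_{1\le k\le j} r_{\sigma_i(k)i}$ in Lemma~\ref{lem:CompTimesOnUniformMachines}. Each such release time $r_{\sigma_i(k)i}$ with $k\le j$ satisfies $r_{\sigma_i(k)i} \le C_{\sigma_i(k)}^\star$ by constraint $(1C)$ (or $(1B)$); but to bound it by $C_x^\star$ I need that $C_{\sigma_i(k)}^\star \le C_x^\star$ for $k\le j$, which is \emph{not} immediate since $\sigma_i$ orders by $C_j^\star - p_{ji}/(2\mu_{ji})$, not by $C_j^\star$. The standard fix — and the step I expect to be the main technical obstacle — is to observe that the maximizing index $k$ can be handled by a case split: either $C_{\sigma_i(k)}^\star \le C_x^\star$ directly, or else $p_{\sigma_i(k)i}/(2\mu_{\sigma_i(k)i})$ is large enough that $r_{\sigma_i(k)i} \le C_{\sigma_i(k)}^\star \le C_x^\star + p_{\sigma_i(k)i}/(2\mu_{\sigma_i(k)i}) - p_{xi}/(2\mu_{xi})$, and the excess $p_{\sigma_i(k)i}/(2\mu_{\sigma_i(k)i})$ term can be absorbed into the $\mathbb P_i(S)/\mu_i$ contribution (since $\sigma_i(k)\in S$ contributes $p_{\sigma_i(k)i}$ to $\mathbb P_i(S)$). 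Carrying this bookkeeping through converts the extra $\max_k r_{\sigma_i(k)i}$ into one additional $C_x^\star$, upgrading $(2+R)$ to $(3+R)$; the rest of the summation is verbatim as before.

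Alternatively, one can route the release-time case through Lemma~\ref{lem:CompTimesOnUniformMachines} by noting $\max_{k\le j} r_{\sigma_i(k)i} \le \max_{k\le j} C_{\sigma_i(k)}^\star$ and then invoking the prefix-monotonicity of $C_j^\star - p_{ji}/(2\mu_{ji})$ together with $(1C)$ to replace each $C_{\sigma_i(k)}^\star$ on the right by $C_x^\star$ plus a half-$p$-over-$\mu$ slack, exactly as in Schulz's treatment of release dates; I would present whichever of these two bookkeeping routes keeps the constants cleanest. In all cases the skeleton is identical: Lemma~\ref{lem:CompTimesOnUniformMachines} for the list-scheduling bound, constraint $(1A)$ on a prefix set for the $2C_x^\star$ term, constraint $(1B)$ and the definition of $R$ for the $RC_x^\star$ term, constraint $(1C)$ for the (if present) extra $C_x^\star$ from release times, and Lemma~\ref{lem:feasForLP1} to pass from $\sum w_j C_j^\star$ to $OPT$.
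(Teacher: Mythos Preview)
Your argument for the $r_{ji}\equiv 0$ case is correct and matches the paper's proof essentially line for line: Lemma~\ref{lem:CompTimesOnUniformMachines} for the list-scheduling bound, constraint~$(1A)$ on the prefix set $S$ together with the monotonicity of $C_j^\star - p_{ji}/(2\mu_{ji})$ to get $\mathbb P_i(S)/\mu_i \le 2C_x^\star - p_{xi}/\mu_{xi}$, and constraint~$(1B)$ plus the definition of $R$ to handle $p_{xi1}/\bar v_i$.

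For the release-time case your second route is the right one and is what the paper does, but your first route (``absorb the excess $p_{\sigma_i(k)i}/(2\mu_{\sigma_i(k)i})$ into the $\mathbb P_i(S)/\mu_i$ contribution'') does not work as stated: the denominators $\mu_{\sigma_i(k)i}$ and $\mu_i$ can differ by a factor of $m_i$, so that excess need not be controlled by $p_{\sigma_i(k)i}/\mu_i$. The clean fix, which you only hint at, is to use the \emph{full} strength of $(1C)$, namely $C_{\sigma_i(k)}^\star \ge r_{\sigma_i(k)i} + p_{\sigma_i(k)i}/\mu_{\sigma_i(k)i}$ rather than merely $C_{\sigma_i(k)}^\star \ge r_{\sigma_i(k)i}$. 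Combining this with the ordering inequality $C_{\sigma_i(k)}^\star - p_{\sigma_i(k)i}/(2\mu_{\sigma_i(k)i}) \le C_x^\star - p_{xi}/(2\mu_{xi})$ gives
\[
r_{\sigma_i(k)i} \;\le\; C_{\sigma_i(k)}^\star - \tfrac{p_{\sigma_i(k)i}}{\mu_{\sigma_i(k)i}} \;\le\; C_x^\star - \tfrac{p_{xi}}{2\mu_{xi}} - \tfrac{p_{\sigma_i(k)i}}{2\mu_{\sigma_i(k)i}} \;\le\; C_x^\star - \tfrac{p_{xi}}{2\mu_{xi}},
\]
so $\max_{k\le j} r_{\sigma_i(k)i} \le C_x^\star$ directly, with nothing left to absorb. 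This is precisely the paper's computation, and it delivers the extra $C_x^\star$ cleanly.
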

\begin{proof}
For $y \in \mathbb{R}$, define $y^+ = \max\{y,0\}$. Now let $x \in N$ be arbitrary, and let $i \in M$ be such that $p_{xi} > 0$ (but otherwise arbitrary). Define $t^*$ as the last task of job $x$ to complete on cluster $i$, and let $j_i$ be such that $\sigma_i(j_i) = x$. Lastly, denote the optimal LP solution $\{C_j\}$.\footnote{We omit the customary $\star$ to avoid clutter in notation.} Because $\{C_j\}$ is a feasible solution to LP1, constraint $(1A)$ implies the following (set $S_i = \{\sigma_i(1),\ldots,\sigma_i(j_i - 1),x\}$)
\begin{align}
\frac{\left( \textstyle\sum_{k = 1}^{j_i} p_{\sigma_i(k)i} \right)^2}{2\mu_i}
	&\leq \sum_{k = 1}^{j_i} p_{\sigma_i(k)i}\left(C_{\sigma_i(k)} - \frac{p_{\sigma_i(k)i}}{2\mu_{\sigma_i(k)i}}\right) \leq \left(C_{x} - \frac{p_{xi}}{2\mu_{xi}}\right)\sum_{k = 1}^{j_i} p_{\sigma_i(k)i} \label{eq:compTimeInUnifLP}
\end{align}
which in turn implies $\textstyle\sum_{k = 1}^{j_i} p_{\sigma_i(k)i}/\mu_i \leq 2C_x - p_{xi}/\mu_{xi}$.

If all subjobs are released at time zero, then we can combine this with Lemma \ref{lem:CompTimesOnUniformMachines} and the fact that $p_{xit^*} \leq p_{xi} = \textstyle\sum_{t \in T_{xi}} p_{xit}$ to see the following (the transition from the first inequality the second inequality uses $C_x \geq p_{xit^*}/v_{1i}$ and $R_i = v_{1i}/\bar{v}_i$).
\begin{align}
\hat{C}_{xi} 
	&\leq 2C_x - \frac{p_{xi}}{\mu_{xi}} + \frac{p_{xit^*}}{\bar{v}_i} - \frac{p_{xit^*}}{\mu_i} \leq 
		C_x(2 + \left[R_i(1 - 2/m_i)\right]^+) \label{eq:generalCompTimeWithOUTReleaseUnif} 
\end{align}

When one or more subjobs are released after time zero, Lemma \ref{lem:CompTimesOnUniformMachines} implies that it is sufficient to bound  $\displaystyle\max_{1 \leq k \leq j_i}{\left\lbrace r_{\sigma_i(k)i} \right\rbrace}$ by some constant multiple of $C_x$. Since $\sigma_i$ is defined by increasing $L_{ji} \doteq C_j - p_{ji}/(2\mu_{ji})$, $L_{\sigma_i(a)i} \leq L_{\sigma_i(b)i}$ implies
\begin{align}
&r_{\sigma_i(a)i} + \frac{p_{\sigma_i(a)i}}{2\mu_{\sigma_i(a)i}} + \frac{p_{\sigma_i(b)i}}{2\mu_{\sigma_i(b)i}}  \leq C_{\sigma_i(a)} - \frac{p_{\sigma_i(a)i}}{2\mu_{\sigma_i(a)i}} + \frac{p_{\sigma_i(b)i}}{2\mu_{\sigma_i(b)i}}  \leq C_{\sigma_i(b)} ~\forall~ a \leq b
\end{align}
and so $\max_{1 \leq k \leq j_i}{\left\lbrace r_{\sigma_i(k)i} \right\rbrace} + p_{xi}/(2\mu_{xi}) \leq C_{x}$. As before, combine this with Lemma \ref{lem:CompTimesOnUniformMachines} and the fact that $p_{xit^*} \leq p_{xi} = \textstyle\sum_{t \in T_{xi}} p_{xit}$ to yield the following inequalities
\begin{align}
\hat{C}_{xi} 
	&\leq 3C_x - \frac{3p_{xi}}{2\mu_{xi}} + \frac{p_{xit^*}}{\bar{v}_i} - \frac{p_{xit^*}}{\mu_i} \leq C_x(3 + \left[R_i(1 - 5/(2m_i))\right]^+)  \label{eq:generalCompTimeWithReleaseUnif} 
\end{align}
-which complete our proof.
\end{proof}
\subsection{CC-LP for Identical Machines}\label{sec:identLP}
\begin{theorem}
If machines are of unit speed, then CC-LP yields an objective that is...
\begin{center}
\begin{tabular}{l | c c}
\hline
 & $r_{ji} \equiv 0$ & some $r_{ji} > 0$ \\ 
 \hline
single-task subjobs & $\leq$ 2 $OPT$ & $\leq $ 3 $OPT$ \\
multi-task subjobs & $\leq$ 3 $OPT$ & $\leq$ 4 $OPT$ \\
\hline
\end{tabular}
\end{center}
\label{thm:identLP}
\end{theorem}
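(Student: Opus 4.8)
The plan is to derive Theorem~\ref{thm:identLP} by specializing Theorem~\ref{thm:uniformLP1} to the unit-speed setting, where the quantity $R$ collapses to $1$ and, in the single-task case, an additive term in the per-subjob completion-time bound vanishes \emph{exactly}. First I would record the consequences of ``identical machines'': $v_{\ell i}=1$ for all $\ell$ and $i$, hence $\mu_i = m_i$, $\bar v_i = \mu_i/m_i = 1$, and $R_i = v_{1i}/\bar v_i = 1$, so $R=\max_i R_i = 1$. With this, the multi-task rows of the table require no further argument: plugging $R=1$ into Theorem~\ref{thm:uniformLP1} gives $\sum_{j}w_j\hat C_j \le (2+1)\,OPT = 3\,OPT$ when $r_{ji}\equiv 0$ and $\le(3+1)\,OPT = 4\,OPT$ when some $r_{ji}>0$.

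For the single-task rows I would re-run the completion-time estimate from the proof of Theorem~\ref{thm:uniformLP1} up to inequalities~\eqref{eq:generalCompTimeWithOUTReleaseUnif} and~\eqref{eq:generalCompTimeWithReleaseUnif}, now exploiting the extra structure available when $|T_{xi}|\le 1$: the unique task of subjob $(x,i)$ satisfies $p_{xit^*}=p_{xi}$, and $q_{xi}=\min\{|T_{xi}|,m_i\}=1$ forces $\mu_{xi}=\sum_{\ell=1}^{q_{xi}}v_{\ell i}=v_{1i}=1$. Substituting $\bar v_i = 1$, $\mu_{xi}=1$, $\mu_i=m_i$, and $p_{xit^*}=p_{xi}$ into the release-free bound
\[
\hat C_{xi} \;\le\; 2C_x - \frac{p_{xi}}{\mu_{xi}} + \frac{p_{xit^*}}{\bar v_i} - \frac{p_{xit^*}}{\mu_i}
\]
collapses the last three terms to $-p_{xi}/m_i \le 0$, so $\hat C_{xi}\le 2C_x$; the analogous release-time bound becomes $\hat C_{xi}\le 3C_x - p_{xi}/2 - p_{xi}/m_i \le 3C_x$. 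Taking the maximum over clusters $i$ with $p_{xi}>0$ (clusters with no load on $x$ being handled trivially, exactly as in the proof of Theorem~\ref{thm:uniformLP1}) gives $\hat C_x \le 2C_x$ when $r_{ji}\equiv 0$ and $\hat C_x \le 3C_x$ otherwise, for every job $x$.

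In all four cases I would then conclude by summing the per-job bound $\hat C_x \le \rho\,C_x$ against the weights and invoking the validity of LP1 (Lemma~\ref{lem:feasForLP1}, Theorem~\ref{thm:LP1feasAndSolve}): since $\{C_x\}$ is the optimal LP solution and LP1 is a relaxation, $\sum_x w_x \hat C_x \le \rho \sum_x w_x C_x \le \rho\,OPT$. The only genuinely delicate point is the single-task argument, and it is a matter of signs rather than computation: one must verify that $p_{xit^*}/\bar v_i$ cancels $p_{xi}/\mu_{xi}$ \emph{exactly}, which needs both a single task (so $p_{xit^*}=p_{xi}$) and unit speed (so $\mu_{xi}=\bar v_i = 1$). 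For multi-task subjobs the two quantities are not comparable in the favorable direction, so that cancellation is unavailable --- which is precisely why those rows only attain $3$ and $4$ rather than $2$ and $3$. Everything else is bookkeeping inherited from Theorem~\ref{thm:uniformLP1}.
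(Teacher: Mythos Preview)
Your argument is correct and follows essentially the same strategy as the paper: specialize the first inequalities in \eqref{eq:generalCompTimeWithOUTReleaseUnif} and \eqref{eq:generalCompTimeWithReleaseUnif} to the identical-machine setting and observe that for single-task subjobs the $+p_{xit^*}/\bar v_i$ and $-p_{xi}/\mu_{xi}$ terms cancel.

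The only notable difference is organizational. You treat the two rows separately --- multi-task by quoting Theorem~\ref{thm:uniformLP1} with $R=1$, single-task by redoing the computation with $q_{xi}=1$ hard-coded. The paper instead leaves $q_{xi}$ symbolic throughout and obtains the single parametric bound
\[
\hat C_{xi} \;\le\; C_x\bigl(2 + [\,1 - 1/m_i - 1/q_{xi}\,]^+\bigr)
\]
(and its release-time analogue), then reads off both rows by setting $q_{xi}=1$ or leaving it arbitrary. This unified form is not just cosmetic: the very next result, Theorem~\ref{thm:identLP_2appxWithConstantTasks}, reuses exactly this inequality with general $q_{xi}$ to handle the constant-task-length case, so the paper's slightly more general bookkeeping pays off downstream. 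Your version is perfectly adequate for Theorem~\ref{thm:identLP} itself.
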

\begin{proof}
Define $[\cdot]^+$, $x$, $C_x$, $\hat{C}_x$, $i$, $\sigma_i$, and $t^*$ as in Theorem \ref{thm:uniformLP1}. When $r_{ji} \equiv 0$, one need only give a more careful treatment of the first inequality in \eqref{eq:generalCompTimeWithOUTReleaseUnif} (using $\mu_{ji} = q_{ji}$).
\begin{align}
\hat{C}_{x,i} 
	&\leq 2C_x + p_{xit^*} - p_{xit^*}/m_i - p_{xi}/q_{xi} 
	\leq C_x(2 + \left[1 - 1/m_i -1/q_{xi} \right]^+) \label{eq:GeneralIdentCompTimeBound}
\end{align}
Similarly, when some $r_{ji} > 0$, the first inequality in \eqref{eq:generalCompTimeWithReleaseUnif} implies the following.
\begin{align}
\hat{C}_{x,i}
	&\leq 3C_x + p_{xit^*} - p_{xit^*}/m_i - 3p_{xi}/(2q_{xi})
	\leq C_x(3 + \left[1 - 1/m_i - 3/(2q_{xi})\right]^+) \label{eq:forCstTimeThmWithRelease}
\end{align}
\end{proof}
The key in the refined analysis of Theorem \ref{thm:identLP} lay in how $-p_{xi}/q_{xi}$ is used to annihilate $+p_{xit^*}$. While $q_{xi} = 1$ (i.e. single-task subjobs) is sufficient to accomplish this, it is not strictly \textit{necessary}. The theorem below shows that we can annihilate the $+p_{xit^*}$ term whenever all tasks of a given subjob are of the same length. Note that the tasks need not be \textit{unit}, as the lengths of tasks across different subjobs can differ.
\begin{theorem}
Suppose $v_{\ell i} \equiv 1$. If $p_{jit}$ is constant over $t \in T_{ji}$ for all $j \in N$ and $i \in M$, then algorithm CC-LP is a 2-approximation when $r_{ji} \equiv 0$, and a 3-approximation otherwise. \label{thm:identLP_2appxWithConstantTasks}
\end{theorem}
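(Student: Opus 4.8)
The plan is to piggyback entirely on the completion-time bounds already derived in the proof of Theorem~\ref{thm:identLP}, and to show that the hypothesis ``$p_{jit}$ constant over $t \in T_{ji}$'' is exactly what is needed to annihilate the leftover additive term there — just as the condition $q_{xi} = 1$ did, but without actually requiring single-task subjobs. Fix an arbitrary job $x$ and a cluster $i$ with $p_{xi} > 0$, and retain all the notation ($t^*$, $\sigma_i$, $C_x$, $\hat{C}_{xi}$) from that proof. Since $v_{\ell i} \equiv 1$ we have $\mu_{xi} = q_{xi}$, so inequalities \eqref{eq:GeneralIdentCompTimeBound} and \eqref{eq:forCstTimeThmWithRelease} read
\[
\hat{C}_{xi} \le 2C_x + p_{xit^*}\left(1 - \frac{1}{m_i}\right) - \frac{p_{xi}}{q_{xi}}
\qquad\text{and}\qquad
\hat{C}_{xi} \le 3C_x + p_{xit^*}\left(1 - \frac{1}{m_i}\right) - \frac{3p_{xi}}{2q_{xi}}
\]
in the cases $r_{ji} \equiv 0$ and some $r_{ji} > 0$ respectively. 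So it suffices to prove that $p_{xit^*}(1 - 1/m_i) \le p_{xi}/q_{xi}$; I will in fact establish the slightly stronger bound $p_{xit^*} \le p_{xi}/q_{xi}$, which covers both cases at once (for the release-time case note $1 \le 3/2$).

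The only structural observation required is that, because all tasks of subjob $(x,i)$ share a common length and $t^*$ is one of those tasks, $p_{xi} = \sum_{t \in T_{xi}} p_{xit} = |T_{xi}|\, p_{xit^*}$. (Here I should note that $t^*$, being the last task of $(x,i)$ to complete, may be taken to be a genuine task of $T_{xi}$ carrying the common processing requirement — immediate since all tasks are equal.) Now split on the two cases defining $q_{xi} = \min\{|T_{xi}|, m_i\}$: if $|T_{xi}| < m_i$ then $q_{xi} = |T_{xi}|$ and $p_{xi}/q_{xi} = p_{xit^*}$ exactly; if $|T_{xi}| \ge m_i$ then $q_{xi} = m_i$ and $p_{xi}/q_{xi} = (|T_{xi}|/m_i)\, p_{xit^*} \ge p_{xit^*}$. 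Either way $p_{xi}/q_{xi} \ge p_{xit^*} \ge p_{xit^*}(1 - 1/m_i)$, so the additive term is nonpositive and $\hat{C}_{xi} \le 2C_x$ (resp.\ $\hat{C}_{xi} \le 3C_x$).

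To conclude, take the maximum over clusters: $\hat{C}_x = \max_{i \in M} \hat{C}_{xi}$, where any cluster with $p_{xi} = 0$ poses no difficulty since that subjob completes at its release time, which is at most $C_x$ by constraint $(1C)$ (or $(1B)$). Hence $\hat{C}_x \le 2C_x$ when $r_{ji} \equiv 0$, and $\hat{C}_x \le 3C_x$ otherwise. Multiplying by $w_x \ge 0$, summing over $x \in N$, and invoking Lemma~\ref{lem:feasForLP1} (so that the LP optimum satisfies $\sum_{j \in N} w_j C_j \le OPT$) gives $\sum_{j \in N} w_j \hat{C}_j \le 2\,OPT$ (resp.\ $3\,OPT$). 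I do not anticipate a genuine obstacle: the entire content is the one-line inequality $p_{xi}/q_{xi} \ge p_{xit^*}$ under equal task lengths, and the only care needed is the small case analysis on $|T_{xi}|$ versus $m_i$ and the bookkeeping for clusters where $x$ has no tasks.
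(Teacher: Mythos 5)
Your proposal is correct and follows essentially the same route as the paper: reuse the bounds \eqref{eq:GeneralIdentCompTimeBound} and \eqref{eq:forCstTimeThmWithRelease} from Theorem \ref{thm:identLP} and use the equal-task-length hypothesis (via $p_{xi} = |T_{xi}|\,p_{xit^*}$ with $|T_{xi}| \geq q_{xi}$) to show the leftover $+p_{xit^*}$ term is absorbed by $-p_{xi}/q_{xi}$. The paper does this by an algebraic rewriting of $p_{xi}/q_{xi}$ rather than your explicit case split on $\min\{|T_{xi}|, m_i\}$, but the argument is the same.
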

\begin{proof}
The definition of $p_{xi}$ gives $p_{xi}/q_{xi} = \textstyle\sum_{t \in T_{xi}} p_{xit} / q_{xi}$. Using the assumption that $p_{jit}$ is constant over $t \in T_{ji}$, we see that $p_{xi}/q_{xi} = (q_{xi} + |T_{xi}| - q_{xi})p_{xit^*} / q_{xi} $, where $|T_{xi} |\geq q_{xi}$. Apply this to Inequality \eqref{eq:GeneralIdentCompTimeBound} from the proof of Theorem \ref{thm:identLP}; some algebra yields 
\begin{align}
\hat{C}_{xi} 
	\leq& 2C_x  - p_{xit^*}/m_i - p_{xit^*}\left(|T_{xi}| - q_{xi}\right)/q_{xi} \leq 2C_x.
\end{align}
The case with some $r_{ji} > 0$ uses the same identity for $p_{xi}/q_{xi}$.
\end{proof}
Sachdeva and Saket \cite{nphard2} showed that it is NP-Hard to approximate $CC|m_i \equiv 1|\sum w_j C_j$ with a constant factor less than 2. 
Theorem \ref{thm:identLP_2appxWithConstantTasks} is significant because it shows that CC-LP can attain the same guarantee for \textit{arbitrary} $m_i$, provided $v_{\ell i} \equiv 1$ and $p_{jit}$ is constant over $t$.

\section{Combinatorial Algorithms}\label{sec:TSPT}




In this section, we introduce an extremely fast combinatorial algorithm with performance guarantees similar to CC-LP for ``unstructured'' inputs (i.e. those for which some $v_{\ell i} > 1$, or some $T_{ji}$ have $p_{jit}$ non-constant over $t$). 
We call this algorithm \textit{CC-TSPT}. 
CC-TSPT uses the MUSSQ algorithm for concurrent open shop (from \cite{mqssu}) as a subroutine. As SWAG (from \cite{HGY}) motivated development of CC-TSPT, we first address SWAG's worst-case performance.

\subsection{A Degenerate Case for SWAG}\label{subsec:swagDegenerate}
\begin{wrapfigure}{r}{0.5\textwidth}
	\vspace{-0.45cm}
  \centering
    \begin{minipage}{0.50\textwidth}
	\begin{algorithm}[H]
	\begin{algorithmic}[1]
	\Procedure{SWAG}{$N,M,p_{ji}$}
	\State $J\gets \emptyset$
	\State $q_i\gets 0,\forall i\in M$
	\While{$|J|\not=|N|$}
	\State mkspn$_j\gets$ max$_{i\in M}\left(\frac{q_i+p_{ji}}{m_i}\right)$
	\item[] \qquad \qquad $\forall j\in N\setminus J$
	\State nextJob $\gets$ argmin$_{j \in N \setminus J}\ $mkspn$_j$
	\State $J.$append$($nextJob$)$
	\State $q_i\gets q_i+ p_{ji}$
	\EndWhile
	\State \textbf{return} $J$
	\EndProcedure
	\end{algorithmic}
	\end{algorithm}
  \end{minipage}
  \vspace{-0.45cm}
\end{wrapfigure}
As a prerequisite for addressing worst-case performance of an existing algorithm, we provide psuedocode and an accompanying verbal description for SWAG.

SWAG computes queue positions for every subjob of every job, supposing that each job was scheduled next. 
A job's potential makespan (``mkspn'') is the largest of the potential finish times of all of its subjobs (considering current queue lengths $q_i$ and each subjob's processing time $p_{ji}$). 
Once potential makespans have been determined, the job with smallest potential makespan is selected for scheduling. 
At this point, all queues are updated. 
Because queues are updated, potential makespans will need to be re-calculated at the next iteration. 
Iterations continue until the very last job is scheduled. Note that SWAG runs in $O(n^2m)$ time.

\begin{theorem}
For an instance $I$ of $PD || \sum C_j$, let $SWAG(I)$ denote the objective function value of SWAG applied to $I$, and let $OPT(I)$ denote the objective function value of an optimal solution to $I$. 
Then for all $L \geq 1$, there exists an $I \in \Omega_{PD || \sum C_j}$ such that $SWAG(I) / OPT(I) > L$.
\label{thm:swagBad}
\end{theorem}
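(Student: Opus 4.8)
The plan is to construct, for each target ratio $L$, an explicit family of instances of $PD\,||\,\sum C_j$ on which SWAG behaves badly. The instance should have two ``scales'' of jobs: a block of many small jobs that SWAG is tricked into scheduling first, and a handful of large jobs whose completion is thereby delayed. Concretely, I would take $m$ clusters each with one machine (so $m_i \equiv 1$), a set of $K$ ``light'' jobs that each have a tiny processing requirement spread thinly across \emph{all} $m$ clusters, and a set of $H$ ``heavy'' jobs, where heavy job $h$ has all its weight concentrated as a large processing requirement on its own private cluster $h$ (and $0$ elsewhere). The key point is the myopic rule in line 5--6 of SWAG: a light job's potential makespan is $\max_i (q_i + p_{ji})/m_i$, and because a light job touches every cluster, after even one heavy job is placed the light jobs' potential makespans jump; conversely at the \emph{start}, when all queues are empty, the light jobs look cheapest and SWAG schedules the entire light block before any heavy job. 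This forces every heavy job to finish only after all $K$ light jobs, giving $\mathrm{SWAG}(I) = \Theta(KH + \text{heavy processing})$, whereas the optimal schedule interleaves (or simply front-loads) the heavy jobs on their private clusters — they don't interfere with each other — so $\mathrm{OPT}(I)$ is far smaller.

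The steps, in order, are: (1) fix parameters $m$, $K$, $H$ and the processing times as functions of $L$ (e.g. each light job has requirement $\epsilon$ on each of the $m$ clusters, each heavy job has requirement $1$ on its own cluster and $0$ elsewhere, with $\epsilon$ chosen small enough that SWAG prefers light jobs at every step of the light phase but large enough — or the count $K$ large enough — that the accumulated delay dominates); (2) trace SWAG's execution: show by induction that at each of the first $K$ iterations the argmin in line 6 selects a light job, because at that stage each remaining light job has potential makespan $\le (k+1)\epsilon$ while each heavy job has potential makespan $1$; (3) conclude that every heavy job has completion time at least $K\epsilon + 1$ (its own cluster's queue only holds light contributions until the heavy phase), so $\mathrm{SWAG}(I) \ge H(K\epsilon + 1) + (\text{light contributions}) \ge HK\epsilon$ roughly; (4) exhibit a feasible schedule — schedule the $H$ heavy jobs first, each on its private cluster in parallel (they share no cluster), then the light jobs — and bound its cost, getting $\mathrm{OPT}(I) = O(H + K\epsilon + K^2\epsilon)$ or similar; (5) take the ratio and choose $K$ (and $\epsilon$, $H$, $m$) so that it exceeds $L$. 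A clean choice is likely $\epsilon = 1$ with unit light jobs, $m$ large, $H$ a constant, $K$ growing — then one must only double-check that unit light jobs still beat the heavy jobs in the argmin, which they do as long as the heavy requirement is $\ge 2$, say.

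The main obstacle I expect is step (2): verifying rigorously that SWAG's tie-breaking/argmin genuinely commits to the entire light block before touching any heavy job, \emph{and} that this remains true for \emph{every} $L$, not just asymptotically. This requires care in choosing the processing times so the inequality ``$\mathrm{mkspn}$ of any remaining light job $<$ $\mathrm{mkspn}$ of any heavy job'' holds throughout the first $K$ iterations — the left side grows as light jobs accumulate on the shared clusters, so I need the heavy requirement to outpace the worst-case accumulated light load on any single cluster at the moment just before the last light job is scheduled. Ensuring this while keeping $\mathrm{OPT}$ small enough that the ratio still blows up is the delicate balance; spreading light jobs across many clusters (large $m$) is exactly what keeps per-cluster light load low and thus keeps both the SWAG-phase invariant and a small $\mathrm{OPT}$ simultaneously. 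Once the invariant is established the rest is routine arithmetic on completion times.
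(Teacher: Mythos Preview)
Your construction is essentially the paper's: light jobs spread across all clusters, heavy jobs each concentrated on a private cluster, so that SWAG greedily drains the light block first and every heavy job is delayed by the accumulated light load. That is exactly the device the paper uses.

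However, your proposed parameter regime is backwards, and with it the argument fails. You suggest keeping $H$ (the number of heavy jobs) constant and letting $K$ (the number of light jobs) grow. Do the arithmetic for your ``clean choice'' (unit light jobs, heavy requirement $2$): SWAG's cost is $K(K+1)/2 + H(K+2)$, while the alternative heavy-first schedule costs $2H + 2K + K(K+1)/2$. The common term $K(K+1)/2$ dominates both as $K\to\infty$, so the ratio tends to $1$, not to anything exceeding $L$. The same thing happens with your earlier $\epsilon$-light parametrization once you account for the $+K$ term in $\mathrm{OPT}$ (each light job picks up the heavy job's processing time on the private cluster, so the light contribution to $\mathrm{OPT}$ is $K + \epsilon K(K+1)/2$, not $K\epsilon + K^2\epsilon$). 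Sending $m$ to infinity while $H$ stays constant changes nothing, since neither cost depends on $m$ beyond $m\ge H$.

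The fix --- and what the paper does --- is to grow the number of heavy jobs (and clusters) while keeping the number of light jobs fixed at roughly $L$. Concretely the paper takes $m$ heavy jobs (one per cluster) with processing $p$, and $L$ light jobs each with processing $p(1-\epsilon)$ on every cluster, with $\epsilon<1/L$; then $\mathrm{SWAG}/\mathrm{ALT}\to L(1-\epsilon)+1 > L$ as $m\to\infty$. Your worry about step~(2) is not the real obstacle: the inequality ``light beats heavy'' reduces to $p(1-\epsilon) < p$ at every iteration and holds trivially throughout. The genuine issue is ensuring the heavy-job delay term $mK\epsilon$ (or $mLp(1-\epsilon)$) dominates, which requires the \emph{heavy} count to diverge.
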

\begin{proof}
Let $L \in \mathbb{N}^+$ be a fixed but arbitrary constant. 
Construct a problem instance $I_L^m$ as follows: 

$N = N_1 \cup N_2$ where $N_1$ is a set of $m$ jobs, and $N_2$ is a set of $L$ jobs. 
Job $j \in N_1$ has processing time $p$ on cluster $j$ and zero all other clusters. 
Job $j \in N_2$ has processing time $p(1-\epsilon)$ on all $m$ clusters. 
$\epsilon$ is chosen so that $\epsilon < 1/L$
(see Figure \ref{fig:swag1}).

\begin{figure}[ht]
\includegraphics[width=\linewidth]{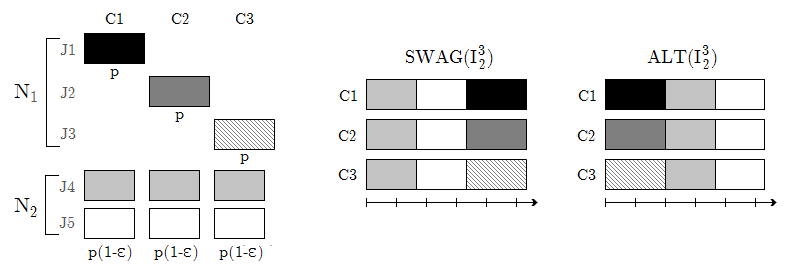}
\centering
\caption{At left, an input for SWAG example with $m=3$ and $L=2$. At right, SWAG's resulting schedule, and an alternative schedule.}
\label{fig:swag1}
\end{figure}

It is easy to verify that SWAG will generate a schedule where all jobs in $N_2$ precede all jobs in $N_1$ (due to the savings of $p \epsilon$ for jobs in $N_2$). 
We propose an \textit{alternative} solution in which all jobs in $N_1$ preceed all jobs in $N_2$. 
Denote the objective value for this alternative solution $ALT(I_L^m)$, noting $ALT(I_L^m) \geq OPT(I_L^m)$.

By symmetry, and the fact that all clusters have a single machine, we can see that $SWAG(I_L^m)$ and $ALT(I_L^m)$ are given by the following
\begin{align}
SWAG(I_L^m) &= p(1-\epsilon)L(L+1)/2 + p(1-\epsilon)L m + p m \\
ALT(I_L^m) &= p(1-\epsilon)L(L + 1)/2 + pL + p m
\end{align}
Since $L$ is fixed, we can take the limit with respect to $m$.
\begin{align}
\lim_{m \rightarrow \infty}{\frac{SWAG(I_L^m)}{ALT(I_L^m)}} 
	&= \lim_{m \rightarrow \infty}{\frac{p(1-\epsilon)L m + p m}{p m}} = L(1-\epsilon) + 1 > L
\end{align}
The above implies the existence of a sufficiently large number of clusters $\overline{m}$, such that $m \geq \overline{m}$ implies $ SWAG(I_L^{m})/OPT(I_L^{m}) > L $. This completes our proof.
\end{proof}
Theorem \ref{thm:swagBad} demonstrates that that although SWAG performed well in simulations, it may not be reliable. 
The rest of this section introduces an algorithm not only with superior runtime to SWAG (generating a permutation of jobs in $O(n^2 + nm)$ time, rather than $O(n^2m)$ time), but also a constant-factor performance guarantee.

\subsection{CC-TSPT : A Fast 2 + R Approximation}\label{subsec:fastreduction}
Our combinatorial algorithm for concurrent cluster scheduling exploits an elegant transformation to concurrent open shop. 
Once we consider this simpler problem, it can be handled with MUSSQ \cite{mqssu} and List-LPT. 
Our contributions are twofold: (1) we prove that this intuitive technique yields an approximation algorithm for a decidedly more general problem, and (2) we show that a \textit{non-intuitive} modification can be made that maintains theoretical bounds while improving empirical performance. 
We begin by defining our transformation.

\begin{definition}[The Total Scaled Processing Time (TSPT) Transformation]
Let $\Omega_{CC}$ be the set of all instances of $CC || \sum w_j C_j$, 
and let $\Omega_{PD}$ be the set of all instances of 
$PD || \sum w_j C_j$. Note that $\Omega_{PD} \subset \Omega_{CC}$.
Then the Total Scaled Processing Time Transformation is a mapping
\begin{align*}
TSPT: ~\Omega_{CC} \to \Omega_{PD} \quad \text{ with } \quad (T, v, w) &\mapsto (X, w) ~:~ x_{ji} = \textstyle\sum_{t \in T_{ji}} p_{jit} / \mu_i
\end{align*}
i.e., $x_{ji}$ is the total processing time required by subjob $(j,i)$, scaled by the sum of machine speeds at cluster $i$. 
Throughout this section, we will use $I = (T, v, w)$ to denote an arbitrary instance of $CC || \sum w_j C_j$, and $I' = (X, w)$ as the image of $I$ under TSPT. 
Figure \ref{fig:tspt} shows the result of TSPT applied to our baseline example. 
\end{definition}

\begin{figure}[ht]
\centering
\includegraphics[width=0.8\linewidth]{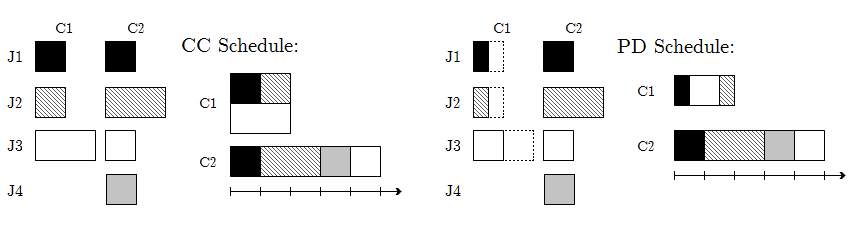}
\caption{An instance $I$ of $CC||\sum w_j C_j$, and its image $I' = TSPT(I)$. The schedules were constructed with List-LPT using the same permutation for $I$ and $I'$. }
\label{fig:tspt}
\end{figure}

We take the time to emphasize the simplicity of our reduction. Indeed, the TSPT transformation is perhaps the first thing one would think of given knowledge of the concurrent open shop problem. What is surprising is how one can attain constant-factor performance guarantees even after such a simple transformation.

\vspace{1em}
\noindent \textbf{Algorithm CC-TSPT} : Execute MUSSQ on $I'= TSPT(I)$ to generate a permutation of jobs $\sigma$. List schedule instance $I$ by 
$\sigma$ on each cluster according to List-LPT.
\vspace{1em}

Towards proving the approximation ratio for CC-TSPT, we will establish a critical inequality in Lemma \ref{lem:TSPTboundInTermsOfPDandOPT}. 
The intuition behind Lemma \ref{lem:TSPTboundInTermsOfPDandOPT} requires thinking of every job $j$ in $I$ as having a corresponding representation in $j'$ in $I'$. 
Job $j$ in $I$ will be scheduled in the $CC$ environment, while job $j'$ in $I'$ will be scheduled in the $PD$ environment. 
We consider what results when the same permutation $\sigma$ is used for scheduling in both environments. 

Now the definitions for the lemma: let $C^{CC}_{\sigma(j)}$ be the completion time of job $\sigma(j)$ resulting from List-LPT on an arbitrary permutation $\sigma$. 
Define $C^{CC\star}_{\sigma(j)}$ as the completion time of job $\sigma(j)$ in the $CC$ environment in the optimal solution. 
Lastly, define $C^{PD,I'}_{\sigma(j')}$ as the completion time of job $\sigma(j')$ in $I'$ when scheduling by List-LPT($\sigma$) in the $PD$ environment.

\begin{lemma}
For $I' = TSPT(I)$, let $j'$ be the job in $I'$ corresponding to job $j$ in $I$. 
For an arbitrary permutation of jobs $\sigma$, we have $C^{CC}_{\sigma(j)} \leq C^{PD,I'}_{\sigma(j')} + R\cdot C^{CC\star}_{\sigma(j)}$. \label{lem:TSPTboundInTermsOfPDandOPT}
\end{lemma}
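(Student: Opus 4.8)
The plan is to apply Lemma \ref{lem:CompTimesOnUniformMachines} to bound $C^{CC}_{\sigma(j)}$ from above, apply the analogous single-machine statement to bound $C^{PD,I'}_{\sigma(j')}$ from below (actually, to rewrite it exactly), and then bound the discrepancy between the two by $R \cdot C^{CC\star}_{\sigma(j)}$. First I would fix the permutation $\sigma$ and fix the job index $j$, and let $i$ be the cluster realizing the maximum in $C^{CC}_{\sigma(j)} = \max_{i \in M} C^{CC}_{\sigma(j)i}$. Since all release times are zero in this section, Lemma \ref{lem:CompTimesOnUniformMachines} gives
\begin{equation}
C^{CC}_{\sigma(j)} = C^{CC}_{\sigma(j)i} \leq \frac{p_{\sigma(j)i1}}{\bar v_i} + \frac{1}{\mu_i}\left(\textstyle\sum_{k=1}^{j} p_{\sigma(k)i} - p_{\sigma(j)i1}\right).
\end{equation}
On the $PD$ side, in instance $I'$ cluster $i$ is a single machine of unit speed carrying jobs with processing times $x_{\sigma(k)i} = p_{\sigma(k)i}/\mu_i$, so List-LPT on $\sigma$ simply processes them in the order of $\sigma$; hence $C^{PD,I'}_{\sigma(j')} \geq C^{PD,I'}_{\sigma(j')i} = \textstyle\sum_{k=1}^{j} x_{\sigma(k)i} = \frac{1}{\mu_i}\textstyle\sum_{k=1}^{j} p_{\sigma(k)i}$.

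Combining the two displays, the right-hand side of the $CC$ bound is at most $C^{PD,I'}_{\sigma(j')} + \left(\frac{1}{\bar v_i} - \frac{1}{\mu_i}\right)p_{\sigma(j)i1}$, and since $\mu_i \geq \bar v_i$ the extra term is at most $p_{\sigma(j)i1}/\bar v_i$. So it remains to show $p_{\sigma(j)i1}/\bar v_i \leq R \cdot C^{CC\star}_{\sigma(j)}$. For this I would use a trivial lower bound on the optimal completion time: in \emph{any} feasible schedule, the largest task $p_{\sigma(j)i1}$ of subjob $(\sigma(j),i)$ must be processed on some machine of cluster $i$, whose speed is at most $v_{1i}$, so $C^{CC\star}_{\sigma(j)} \geq p_{\sigma(j)i1}/v_{1i}$ (this is essentially constraint $(1B)$ with $r \equiv 0$, specialized to the longest task). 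Then $p_{\sigma(j)i1}/\bar v_i = (v_{1i}/\bar v_i)\cdot(p_{\sigma(j)i1}/v_{1i}) = R_i \cdot (p_{\sigma(j)i1}/v_{1i}) \leq R_i \cdot C^{CC\star}_{\sigma(j)} \leq R \cdot C^{CC\star}_{\sigma(j)}$, using $R = \max_i R_i$ and $R_i = v_{1i}/\bar v_i$. Chaining everything gives $C^{CC}_{\sigma(j)} \leq C^{PD,I'}_{\sigma(j')} + R\cdot C^{CC\star}_{\sigma(j)}$, as desired.

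The only subtlety — and the step I would be most careful about — is making sure the index $i$ is used consistently: it is the cluster achieving the max for $C^{CC}_{\sigma(j)}$ under List-LPT, and I am lower-bounding $C^{PD,I'}_{\sigma(j')}$ by its value \emph{on that same cluster} $i$, which is legitimate since $C^{PD,I'}_{\sigma(j')} = \max_{i'} C^{PD,I'}_{\sigma(j')i'} \geq C^{PD,I'}_{\sigma(j')i}$. Everything else is arithmetic on the two bounds plus the elementary fact $\bar v_i \leq \mu_i$ (indeed $\mu_i = m_i \bar v_i$). No genuinely hard estimate is involved; the content is entirely in correctly pairing the Gonzalez-type upper bound for $CC$ with the exact $PD$ schedule length and absorbing the leftover $p_{\sigma(j)i1}$ term into $R\cdot C^{CC\star}_{\sigma(j)}$.
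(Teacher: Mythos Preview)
Your proposal is correct and follows essentially the same approach as the paper: apply Lemma~\ref{lem:CompTimesOnUniformMachines} to bound $C^{CC}_{\sigma(j)i}$, use the serial nature of the $PD$ environment to identify $\sum_{k=1}^{j} p_{\sigma(k)i}/\mu_i \leq C^{PD,I'}_{\sigma(j')}$, and absorb the leftover $p_{\sigma(j)i1}/\bar v_i$ term into $R\cdot C^{CC\star}_{\sigma(j)}$ via the trivial lower bound $C^{CC\star}_{\sigma(j)} \geq p_{\sigma(j)i1}/v_{1i}$. Your version is in fact slightly more explicit than the paper's (you fix the maximizing cluster $i$ up front and spell out the $(1B)$-style justification for the $R$ bound), but the argument is the same.
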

\begin{proof}
After list scheduling has been carried out in the $CC$ environment, we may determine $C^{CC}_{\sigma(j)i}$ - the completion time of subjob $(\sigma(j),i)$. 
We can bound $C^{CC}_{\sigma(j)i}$ using Lemma \ref{lem:CompTimesOnUniformMachines} (which implies \eqref{eqInLem:TSPTbound1}), and the serial-processing nature of the $PD$ environment (which implies \eqref{eqInLem:TSPTbound2}).
\begin{align}
& C^{CC}_{\sigma(j)i} \leq p_{\sigma(j)i1}\left(1/\bar{v} - 1/\mu_i\right) + \textstyle\sum_{\ell = 1}^{j} p_{\sigma(\ell)i}/\mu_i  \label{eqInLem:TSPTbound1} \\
& \textstyle\sum_{\ell = 1}^j p_{\sigma(\ell)i}/\mu_i \leq C^{PD,I'}_{\sigma(j')} \quad \forall ~i \in M \label{eqInLem:TSPTbound2}
\end{align}
If we relax the bound given in Inequality \eqref{eqInLem:TSPTbound1} and combine it with Inequality \eqref{eqInLem:TSPTbound2}, we see that $C^{CC}_{\sigma(j)i} \leq C^{PD,I'}_{\sigma(j')}  +  p_{\sigma(j)i1}/\bar{v}$. 
The last step is to replace the final term with something more meaningful. Using $p_{\sigma(j)1}/\bar{v} \leq R \cdot C^{CC\star}_{\sigma(j)}$ (which is immediate from the definition of $R$) the desired result follows.
\end{proof}
While Lemma \ref{lem:TSPTboundInTermsOfPDandOPT} is true for arbitrary $\sigma$, now we consider $\sigma = MUSSQ(X, w)$. 
The proof of MUSSQ's correctness established the first inequality in the chain of inequalities below. 
The second inequality can be seen by substituting $p_{ji} / \mu_{i}$ for $x_{ji}$  in LP0($I'$) (this shows that the constraints in LP0($I'$) are weaker than those in LP1($I$)). 
The third inequality follows from the Validity Lemma.
\begin{equation}
\textstyle\sum_{j \in N} w_{\sigma(j)} C^{PD,I'}_{\sigma(j)} 
	\leq 2 \textstyle\sum_{j \in N} w_j C^{\text{LP0}(I')}_j
	\leq 2 \textstyle\sum_{j \in N} w_j C^{\text{LP1}(I)}_j
	\leq 2 OPT(I) \label{eq:tsptCore}
\end{equation}
Combining Inequality \eqref{eq:tsptCore} with Lemma \ref{lem:TSPTboundInTermsOfPDandOPT} allows us to bound the objective in a way that does not make reference to $I'$.
\begin{equation}
\textstyle\sum_{j \in N} w_{\sigma(j)}C^{CC}_{\sigma(j)} 
	\leq \textstyle\sum_{j \in N} w_{\sigma(j)}\left[C^{PD,I'}_{\sigma(j)} + R\cdot C^{CC\star}_{\sigma(j)}\right] \leq ~ 2 \cdot OPT(I) + R \cdot OPT(I) \label{eq:dontReferenceIPrime}
\end{equation}
Inequality \eqref{eq:dontReferenceIPrime} completes our proof of the following theorem.
\begin{theorem}
Algorithm CC-TSPT is a $2 + R$ approximation for $CC || \sum w_j C_j$.
\label{thm:algCCTspt}
\end{theorem}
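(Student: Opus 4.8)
The plan is to assemble the theorem directly from the three ingredients already in hand: Lemma~\ref{lem:TSPTboundInTermsOfPDandOPT}, the MUSSQ performance chain in Inequality~\eqref{eq:tsptCore}, and the observation that $\sum_{j \in N} w_{\sigma(j)} C^{CC\star}_{\sigma(j)}$ is precisely $OPT(I)$. Concretely, I would first fix the permutation $\sigma = MUSSQ(X,w)$ produced by CC-TSPT, and note that the schedule CC-TSPT outputs has objective value $\sum_{j \in N} w_{\sigma(j)} C^{CC}_{\sigma(j)}$, where $C^{CC}_{\sigma(j)}$ is the completion time of job $\sigma(j)$ under List-LPT($\sigma$) applied on each cluster.

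Next I would invoke Lemma~\ref{lem:TSPTboundInTermsOfPDandOPT}, which holds for \emph{any} permutation and hence for this particular $\sigma$: multiply the pointwise bound $C^{CC}_{\sigma(j)} \leq C^{PD,I'}_{\sigma(j')} + R\cdot C^{CC\star}_{\sigma(j)}$ by the nonnegative weight $w_{\sigma(j)}$ and sum over all jobs, obtaining $\sum_{j \in N} w_{\sigma(j)} C^{CC}_{\sigma(j)} \leq \sum_{j \in N} w_{\sigma(j)} C^{PD,I'}_{\sigma(j')} + R \sum_{j \in N} w_{\sigma(j)} C^{CC\star}_{\sigma(j)}$. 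Then I would bound the two terms on the right separately. The first term is exactly the objective of the schedule that MUSSQ's ordering induces in the $PD$ environment on $I'$, so Inequality~\eqref{eq:tsptCore} bounds it by $2\,OPT(I)$; this is the one place where the fact that $\sigma$ is \emph{MUSSQ}'s output (not an arbitrary permutation) is used, together with the nesting of LP0($I'$) inside LP1($I$) under $x_{ji} = p_{ji}/\mu_i$ and the Validity Lemma. The second term is simply $R$ times the optimal $CC$ objective, i.e.\ $R\cdot OPT(I)$. Adding the two bounds gives $\sum_{j \in N} w_{\sigma(j)} C^{CC}_{\sigma(j)} \leq (2+R)\,OPT(I)$, which is exactly Inequality~\eqref{eq:dontReferenceIPrime}; since its left-hand side is the CC-TSPT objective, the theorem follows.

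As for the main obstacle: at the level of this theorem there is essentially none — it is a two-line combination of earlier results. The genuine work lives upstream, in Lemma~\ref{lem:TSPTboundInTermsOfPDandOPT}, whose proof must convert the List-LPT completion-time bound of Lemma~\ref{lem:CompTimesOnUniformMachines} into a comparison against the serial $PD$ schedule and then absorb the leftover $p_{\sigma(j)i1}/\bar{v}$ term into $R\cdot C^{CC\star}_{\sigma(j)}$, and in verifying that the constraints of LP0($I'$) are implied by those of LP1($I$) after substituting $x_{ji}=p_{ji}/\mu_i$. The one point of care in the present step is bookkeeping: the permutation scheduled in the $CC$ environment, the permutation scheduled in the $PD$ environment on $I'$, and the permutation for which the MUSSQ guarantee holds must all be the same $\sigma$, so that the two upper bounds may legitimately be summed term by term.
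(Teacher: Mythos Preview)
Your proposal is correct and follows essentially the same route as the paper: fix $\sigma=\text{MUSSQ}(X,w)$, sum the pointwise bound of Lemma~\ref{lem:TSPTboundInTermsOfPDandOPT} against the weights, then bound the two resulting sums by Inequality~\eqref{eq:tsptCore} and by $R\cdot OPT(I)$ respectively to obtain Inequality~\eqref{eq:dontReferenceIPrime}. Your remarks about where the real work lies (upstream in Lemma~\ref{lem:TSPTboundInTermsOfPDandOPT} and in the LP0/LP1 comparison) and about the bookkeeping requirement that all three uses of $\sigma$ coincide are accurate and match the paper's reasoning.
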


\subsection{CC-TSPT with Unit Tasks and Identical Machines}\label{subsec:tsptOnUnitTasks}
Consider concurrent cluster scheduling with $v_{\ell i} = p_{jit} = 1$ (i.e., all processing times are unit, although the size of the collections $T_{ji}$ are unrestricted). In keeping with the work of Zhang, Wu, and Li \cite{zwl} (who studied this problem in the single-cluster case), we call instances with these parameters ``fully parallelizable,'' and write $\beta = fps$ for Graham's $\alpha|\beta|\gamma$ taxonomy.

Zhang et al. showed that scheduling jobs greedily by  ``Largest Ratio First'' (decreasing $w_j / p_{j}$) results in a 2-approximation, where 2 is a tight bound. 
This comes as something of a surprise since the Largest Ratio First policy is \textit{optimal} for $1||\sum w_j C_j~$- which their problem very closely resembles. 
We now formalize the extent to which $P|fps|\sum w_j C_j$ resembles $1||\sum w_j C_j~$: define the \textit{time resolution} of an instance $I$ of $CC |fps| \sum w_jC_j$ as $ \rho_I = \min_{j \in N, i \in M}{\big\lceil{p_{ji}/m_i}\big\rceil}$. 
Indeed, one can show that as the time resolution increases, the performance guarantee for LRF on $P | fps | \sum w_j C_j$ approaches that of LRF on $1||\sum w_j C_j$. 
We prove the analogous result for our problem. 
\begin{theorem}
CC-TSPT for $CC |fps| \sum w_jC_j$ is a $(2 + 1/\rho_I)-$approximation.
\label{thm:tspt_unit_tasks}
\end{theorem}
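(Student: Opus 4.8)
The plan is to retrace the proof of Theorem~\ref{thm:algCCTspt} but to exploit the fully parallelizable structure so as to replace the crude estimate $p_{\sigma(j)i1}/\bar v \le R\cdot C^{CC\star}_{\sigma(j)}$ used at the very end of Lemma~\ref{lem:TSPTboundInTermsOfPDandOPT} by something much sharper. First I would record the easy observations that in the $fps$ setting $\mu_i = m_i$, so $\bar v_i = \mu_i/m_i = 1$, and that $p_{\sigma(j)i1} = 1$ whenever subjob $(\sigma(j),i)$ is nonempty (while an empty subjob completes at time $0$ and contributes nothing to $C^{CC}_{\sigma(j)} = \max_i C^{CC}_{\sigma(j)i}$). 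Running CC-TSPT with $\sigma = \text{MUSSQ}(\text{TSPT}(I))$ and reading off the intermediate inequality inside the proof of Lemma~\ref{lem:TSPTboundInTermsOfPDandOPT}, namely $C^{CC}_{\sigma(j)i} \le C^{PD,I'}_{\sigma(j')} + p_{\sigma(j)i1}/\bar v_i$, therefore yields $C^{CC}_{\sigma(j)} \le C^{PD,I'}_{\sigma(j')} + 1$ for every job, with the ``$+1$'' present only when job $\sigma(j)$ has a nonempty subjob.

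The crux is then a lower bound on the true optimum rather than the LP: for any $j$ and any cluster $i$ with $p_{ji} > 0$, I would argue $C^{CC\star}_j \ge \lceil p_{ji}/m_i\rceil$. This is the standard volume/pigeonhole bound for unit-length tasks: subjob $(j,i)$ consists of $p_{ji}$ unit tasks, all of which must be placed on the $m_i$ machines of cluster $i$, so some machine receives at least $\lceil p_{ji}/m_i\rceil$ of them, and a machine that processes $k$ unit tasks cannot finish the last of them before time $k$. Minimizing over all $(j,i)$ with $p_{ji}>0$ gives $C^{CC\star}_j \ge \rho_I$ for every job possessing a nonempty subjob (and for a job with no nonempty subjob, $C^{CC}_{\sigma(j)} = 0$ and there is nothing to prove). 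Consequently, whenever the ``$+1$'' term is present we may bound $1 \le C^{CC\star}_{\sigma(j)}/\rho_I$, so that $C^{CC}_{\sigma(j)} \le C^{PD,I'}_{\sigma(j')} + \tfrac{1}{\rho_I}\,C^{CC\star}_{\sigma(j)}$ holds for all $j$.

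To finish, multiply this inequality by $w_{\sigma(j)}$ and sum over $j$. The first term is handled exactly as in Section~\ref{subsec:fastreduction}: Inequality~\eqref{eq:tsptCore}, which holds for arbitrary instances (in particular $fps$ instances), gives $\sum_{j} w_{\sigma(j)} C^{PD,I'}_{\sigma(j)} \le 2\,OPT(I)$, while $\sum_j w_{\sigma(j)} C^{CC\star}_{\sigma(j)} = OPT(I)$ by definition of the optimum. Adding the two contributions gives $\sum_j w_{\sigma(j)} C^{CC}_{\sigma(j)} \le (2 + 1/\rho_I)\,OPT(I)$, which is the theorem.

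The one point that requires care — the ``main obstacle'' such as it is — is the lower bound $C^{CC\star}_j \ge \lceil p_{ji}/m_i\rceil$ together with the bookkeeping around empty subjobs. It is tempting to read this bound off constraint $(1C)$ of LP1, but $(1C)$ only gives $C_j \ge p_{ji}/\min\{p_{ji},m_i\}$, which falls short of the ceiling (e.g. $p_{ji}=3$, $m_i=2$ gives $3/2 < 2$). The ceiling — and with it the guarantee that $\rho_I \ge 1$, so that $1/\rho_I$ is meaningful — genuinely uses the integrality of an actual schedule, so the argument must invoke $C^{CC\star}$ directly rather than the LP value. Everything else is just the $2+R$ analysis specialized to $R = 1$.
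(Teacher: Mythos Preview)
Your proposal is correct and follows essentially the same approach as the paper's proof: specialize the intermediate inequality from Lemma~\ref{lem:TSPTboundInTermsOfPDandOPT} to obtain $C^{CC}_{\sigma(j)} \le C^{PD,I'}_{\sigma(j')} + 1$, invoke $C^{CC\star}_{\sigma(j)} \ge \rho_I$ to convert the $+1$ into $+C^{CC\star}_{\sigma(j)}/\rho_I$, then sum and apply Inequality~\eqref{eq:tsptCore}. Your treatment is in fact more careful than the paper's about empty subjobs and about why the ceiling bound requires an actual schedule rather than constraint~$(1C)$, but the argument is the same.
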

\begin{proof}
Applying techniques from the proof of Lemma \ref{lem:TSPTboundInTermsOfPDandOPT} under the hypothesis of this theorem, we have $C^{CC}_{\sigma(j), i} \leq C^{PD,I'}_{\sigma(j)} + 1$. 
Next, use the fact that for all $j \in N$, $C^{CC,OPT}_{\sigma(j)} \geq \rho_I$ by the definition of $\rho_I$. These facts together imply
$C^{CC}_{\sigma(j), i} \leq C^{PD,I'}_{\sigma(j)} + C^{CC,OPT} / \rho_I$. Thus
\begin{align}
\textstyle\sum_{j \in N} w_j C^{CC}_{\sigma(j)} 
	&\leq \textstyle\sum_{j \in N} w_j \left[C^{PD,I'}_{\sigma(j)} + C^{CC,OPT} / \rho_I\right] \leq 2 \cdot OPT + OPT / \rho_I.
\end{align}
\end{proof}

\subsection{CC-ATSPT : Augmenting the LP Relaxation}
The proof of Theorem \ref{thm:algCCTspt} appeals to a trivial lower bound on $C^{CC\star}_{\sigma(j)}$, namely $p_{\sigma(j)1}/\bar{v} \leq R \cdot C^{CC\star}_{\sigma(j)}$. We attain constant-factor performance guarantees in spite of this, but it is natural to wonder how the \textit{need} for such a bound might come hand-in-hand with empirical weaknesses. Indeed, TSPT can make subjobs consisting of many small tasks look the same as subjobs consisting of a single very long task.
Additionally, a cluster hosting a subjob with a single extremely long task might be identified as a bottleneck by MUSSQ, even if that cluster has more machines than it does tasks to process.

We would like to mitigate these issues by introducing the simple lower bounds on $C_j$ as seen in constraints $(1B)$ and $(1C)$. This is complicated by the fact that MUSSQ's proof of correctness only allows constraints of the form in $(1A)$. For $I \in \Omega_{PD}$ this is without loss of generality, since $|S| = 1$ in LP0 implies $C_j \geq p_{ji}$, but since we apply LP0 to $I' = TSPT(I)$, $C_j \geq x_{ji}$ is equivalent to $C_j \geq p_{ji}/\mu_i$ (a much weaker bound than we desire). 

Nevertheless, we can bypass this issue by introducing additional clusters and appropriately defined subjobs. We formalize this with the ``Augmented Total Scaled Processing Time'' (ATSPT) transformation. 
Conceptually, ATSPT creates $n$ ``imaginary clusters'', where each imaginary cluster has nonzero processing time for exactly one job.
\begin{definition}[The Augmented TSPT Transformation]
Let $\Omega_{CC}$ and $\Omega_{PD}$ be as in the definition for TSPT. Then the Augmented TSPT Transformation is likewise a mapping
\begin{align*}
ATSPT: ~\Omega_{CC} \to \Omega_{PD} \quad \text{ with } \quad (T, v, w) &\mapsto (X, w) ~:~ X = \big[\begin{array}{c|c} X_{TSPT(I)} & D \end{array}\big].
\end{align*}
Where $D \in \mathbb{R}^{n \times n}$ is a diagonal matrix with $d_{jj}$ as any valid lower bound on the completion time of job $j$ (such as the right hand sides of constraints ($1B$) and ($1C$) of LP1).
\end{definition}
Given that $d_{jj}$ is a valid lower bound on the completion time of job $j$, it is easy to verify that for $I' = ATSPT(I)$, LP1($I'$) is a valid relaxation of $I$. 
Because MUSSQ returns a permutation of jobs for use in list scheduling by List-LPT, these ``imaginary clusters'' needn't be accounted for beyond the computations in MUSSQ.

\section{A Reduction for Minimizing Total Weighted Lateness on Identical Parallel Machines }\label{sec:relationshipsBetweenProbs}
The problem of minimizing total weighted lateness on a bank of identical parallel machines is typically denoted $P || \sum w_jL_j$, where the lateness of a job with deadline $d_j$ is $L_j \doteq \max{\{C_j - d_j, 0\}}$. The reduction we offer below shows that $P || \sum w_j L_j$ can be stated in terms of $CC || \sum w_jC_j$ \textit{at optimality}. Thus while a $\Delta$ approximation to $CC || \sum w_jC_j$ does not imply a $\Delta$ approximation to $P || \sum w_j L_j$, the reduction below nevertheless provides new insights on the structure of $P || \sum w_j L_j$.

\begin{definition}[Total Weighted Lateness Reduction]
Let $I = (p, d, w, m)$ denote an instance of $P || \sum w_j L_j$. 
$p$ is the set of processing times, $d$ is the set of deadlines, $w$ is the set of weights, 
and $m$ is the number of identical parallel machines. 
Given these inputs, we transform $I \in \Omega_{P || \sum w_j L_j}$ 
to $I' \in \Omega_{CC}$ in the following way.

Create a total of $n + 1$ clusters. Cluster 0 has $m$ machines. Job $j$ has processing time $p_j$ on this cluster, and $|T_{j0}| = 1$. Clusters 1 through $n$ each consist of a single machine. Job $j$ has processing time $d_j$ on cluster $j$, and zero on all clusters other than cluster 0 and cluster $j$. Denote this problem $I'$.
\end{definition}
We refer the reader to Figure \ref{fig:probstm3and4} for an example output of this reduction.
\begin{theorem}
Let $I$ be an instance of $P || \textstyle\sum w_j L_j$. Let $I'$ be an instance of $CC|| \sum w_j C_j$ resulting from the transformation described above. Any list schedule $\sigma$ that is optimal for $I'$ is also optimal for $I$.
\end{theorem}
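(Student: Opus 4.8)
The plan is to show that the $CC||\sum w_j C_j$ instance $I'$ exactly encodes the lateness objective, up to an additive constant that does not depend on the schedule. First I would unpack the structure of $I'$: each job $j$ has a subjob on the shared cluster $0$ (with $m$ machines, processing requirement $p_j$, a single task) and a subjob on its ``private'' single-machine cluster $j$ (with processing requirement $d_j$, and nothing else on that cluster). Since cluster $j$ hosts only job $j$ on a single machine, the completion time of the subjob $(j,j)$ under \emph{any} list schedule is exactly $d_j$ (it simply runs from time $0$ to time $d_j$ with no contention). Hence for every job $j$ and every list schedule, $C_j = \max\{C_{j0},\, d_j\}$, where $C_{j0}$ is the completion time of job $j$'s subjob on cluster $0$.

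Next I would observe that cluster $0$ is, in isolation, just a bank of $m$ identical parallel machines processing the $n$ jobs with processing times $p_j$ — i.e., exactly the machine environment of the original instance $I$. Moreover, List-LPT on cluster $0$, restricted to single-task subjobs, is precisely list scheduling on $m$ identical machines. So a permutation $\sigma$ induces, via List-LPT on cluster $0$, a schedule of $I$ in which job $j$ completes at time $C_{j0}$; conversely every list schedule of $I$ arises this way. Then
\[
C_j = \max\{C_{j0}, d_j\} = d_j + \max\{C_{j0} - d_j, 0\} = d_j + L_j,
\]
where $L_j$ is the lateness of job $j$ in the corresponding schedule of $I$. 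Therefore
\[
\sum_{j\in N} w_j C_j = \sum_{j\in N} w_j d_j + \sum_{j\in N} w_j L_j,
\]
and since $\sum_j w_j d_j$ is a fixed constant independent of $\sigma$, minimizing $\sum w_j C_j$ over list schedules of $I'$ is equivalent to minimizing $\sum w_j L_j$ over list schedules of $I$. This gives the claimed correspondence: an optimal list schedule $\sigma$ for $I'$ yields an optimal schedule for $I$.

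The one subtlety I would take care with — and I expect this to be the main obstacle — is the direction of the correspondence between ``list schedules of $I'$'' and ``feasible schedules of $I$.'' I need to argue that (a) restricting attention to list schedules of $I'$ loses no optimality (this follows from Section~\ref{sec:listSched}: any job ordering can be turned into a schedule via List-LPT, and for this instance the private clusters impose no constraints, so the reachable completion-time vectors on cluster $0$ are exactly those of list schedules on $m$ identical machines, which is known to contain an optimum for $\sum w_j C_j$ and hence for $\sum w_j L_j$ on identical machines), and (b) the value of a list schedule of $I'$ under $\sum w_j C_j$ equals the value of the induced schedule of $I$ under $\sum w_j L_j$ plus the constant $\sum_j w_j d_j$, which is the displayed identity above. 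Putting (a) and (b) together, any $\sigma$ optimal for $I'$ minimizes $\sum w_j L_j$ among all (list) schedules of $I$, and since some list schedule is globally optimal for $I$, $\sigma$ is optimal for $I$. I would also note explicitly that the theorem only claims optimality is preserved, not approximation ratios — the additive constant $\sum_j w_j d_j$ is exactly what breaks the approximation-preserving property, so no further argument about that is needed.
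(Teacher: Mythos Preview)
Your proposal is correct and follows essentially the same additive-constant argument as the paper: the private clusters force $C_j \geq d_j$, so $C_j = d_j + (C_{j0} - d_j)^+ = d_j + L_j$, whence $\sum w_j C_j = \sum w_j d_j + \sum w_j L_j$ and optimality transfers. Your treatment is in fact more careful than the paper's---you explicitly justify why the private-cluster completion time is exactly $d_j$, why list schedules on cluster $0$ coincide with list schedules of $I$, and why restricting to list schedules is without loss of optimality, all of which the paper leaves implicit or asserts in a single clause.
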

\begin{proof}
If we restrict the solution space of $I'$ to single permutations (which we may do without loss of generality), then any schedule $\sigma$ for $I$ or $I'$ produces the same value of $\sum_{j \in N}  w_j(C_j - d_j)^+$ for $I$ and $I'$.
The additional clusters we added for $I'$ ensure that $C_j \geq d_j$. Given this, the objective for $I$ can be written as $\sum_{j \in N} w_j d_j + w_j(C_j - d_j)^+$. Because $w_j d_j$ is a constant, any permutation to solve $I'$ optimally also solves $\sum_{j \in N} w_j (C_j - d_j)^+$ optimally. Since $\sum_{j \in N} w_j (C_j - d_j)^+ = \sum_{j \in N} w_j L_j$, we have the desired result.
\end{proof}
\section{Closing Remarks}\label{sec:discAndConc}
We now take a moment to address a subtle issue in the concurrent cluster problem: what price do we pay for using the same permutation on all clusters (i.e. single-$\sigma$ schedules)? For concurrent open shop, it has been shown (\cite{Sris1993, mqssu}) that single-$\sigma$ schedules may be assumed without loss of optimality. As is shown in Figure \ref{fig:singleVsMultiPerm}, this does \textit{not} hold for concurrent cluster scheduling in the general case. In fact, that is precisely why the strong performance guarantees for algorithm CC-LP rely on clusters having possibly unique permutations.

\begin{figure}[ht]
\centering
\includegraphics[width=0.7\textwidth]{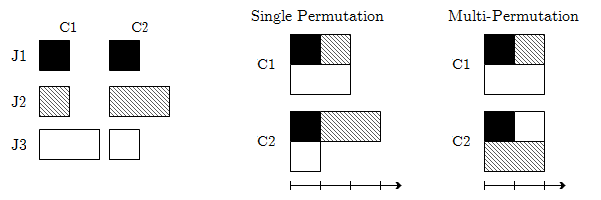}
\caption{An instance of $CC||\sum C_j$ (i.e. $w_j \equiv 1$) for which there does not exist a single-$\sigma$ schedule which attains the optimal objective value. In the single-$\sigma$ case, one of the jobs necessarily becomes delayed by one time unit compared to the multi-$\sigma$ case. As a result, we see a 20\% optimality gap even when $v_{\ell i } \equiv 1$.}\label{fig:singleVsMultiPerm}
\centering
\end{figure}

Our more novel contributions came in our analysis for CC-TSPT and CC-ATSPT. First, we could not rely on the processing time of the last task for a job to be bounded above by the job's completion time variable $C_j$ in LP0($I'$), and so we appealed to a lower bound on $C_j$ that was not stated in the LP itself. The need to incorporate this second bound is critical in realizing the strength of algorithm CC-TSPT, and uncommon in LP rounding schemes. Second, CC-ATSPT is novel in that it introduces constraints that would be redundant for LP0($I$) when $I \in \Omega_{PD}$, but become relevant when viewing $LP0(I')$ as a relaxation for $I \in \Omega_{CC}$. This approach has potential for more broad applications since it represented effective use of a limited constraint set supported by a known primal-dual algorithm.

We now take a moment to state some open problems in this area. One topic of ongoing research is developing a factor 2 purely combinatorial algorithm for the special case of concurrent cluster scheduling considered in Theorem \ref{thm:identLP_2appxWithConstantTasks}. In addition, it would be of broad interest to determine the worst-case loss to optimality incurred by assuming single-permutation schedules for $CC|v\equiv 1|\sum w_j C_j$. The simple example above shows that an optimal single-$\sigma$ schedule can have objective 1.2 times the globally optimal objective. Meanwhile, Theorem \ref{thm:algCCTspt} shows that there always exists a single-$\sigma$ schedule with objective no more than 3 times the globally optimal objective. Thus, we know that the worst-case performance ratio is in the interval $[1.2,3]$, but we do not know its precise value. As a matter outside of scheduling theory, it would be valuable to survey primal-dual algorithms with roots in LP relaxations to determine which have constraint sets that are amenable to implicit modification, as in the fashion of CC-ATSPT.

\subparagraph*{Acknowledgments.}

Special thanks to Andreas Schulz for sharing some of his recent work with us \cite{Schulz2012}. His thorough analysis of a linear program for $P||\sum w_j C_j$ drives the LP-based results in this paper. Thanks also to Chien-Chung Hung and Leana Golubchik for sharing \cite{HGY} while it was under review, and to Ioana Bercea and Manish Purohit for their insights on SWAG's performance. Lastly, our sincere thanks to William Gasarch for organizing the REU which led to this work, and to the 2015 CAAR-REU cohort for making the experience an unforgettable one; in the words of Rick Sanchez \textit{wubalubadubdub!}

\bibliography{p234-murray}

\end{document}